\def\R{\mathbb{R}}
\newcommand\reals{\mathbb{R}}
\newcommand\vx{\vec{x}}
\newcommand{\vr}[1]{\boldsymbol{#1}}
\newcommand{\vrx}{\vr{x}}
\newcommand{\ddl}{$d{\cal L}$}
\newcommand{\keyx}{\texttt{Keymaera X}}
\newcommand{\cphi}{\varphi}
\newcommand{\blue}[1]{\textcolor{blue}{#1}}
\newcommand{\lie}{{\cal L}}
\newcommand{\Vcst}{V_{\mathit{cst}}}
\newcommand{\dVcst}{\dot{V}_{\mathit{cst}}}
\newcommand{\invcst}{\Vcst \leq 0}
\newcommand{\Vpro}{V_{\mathit{pro}}}
\newcommand{\rref}[2][]{\prettyref{#2}}
\def\vec#1{\mathchoice{\mbox{$\mathbf\displaystyle#1$}}
  {\mbox{$\mathbf\textstyle#1$}}
  {\mbox{$\mathbf\scriptstyle#1$}}
  {\mbox{$\mathbf\scriptscriptstyle#1$}}}
\title{Formal Verification of Station Keeping Maneuvers for a Planar Autonomous Hybrid System} 
\author{ 
Benjamin Martin
\institute{LIX, Ecole Polytechnique, CNRS\\ Universit{\'e} Paris-Saclay, 91120 Palaiseau,\\ France}
\email{bmartin@lix.polytechnique.fr}
\and
Khalil Ghorbal
\institute{INRIA, Rennes,\\ France}
\email{khalil.ghorbal@inria.fr} \vspace{1em}
\and
Eric Goubault
\institute{LIX, Ecole Polytechnique, CNRS\\ Universit{\'e} Paris-Saclay, 91120 Palaiseau,\\ France}
\email{goubault@lix.polytechnique.fr} 
\and
Sylvie Putot
\institute{LIX, Ecole Polytechnique, CNRS\\ Universit{\'e} Paris-Saclay, 91120 Palaiseau,\\ France}
\email{putot@lix.polytechnique.fr} 
}
\newtheorem{proposition}{Proposition}
\newtheorem{definition}{Definition}
\newtheorem{theorem}{Theorem}
\newtheorem{lemma}{Lemma}
\newtheorem{remark}{Remark}
\begin{document}
%\author{Benjamin Martin\inst{1} \and Khalil Ghorbal\inst{2} \and Eric Goubault\inst{1} \and Sylvie Putot\inst{1}}
%
%\institute{LIX, Ecole Polytechnique, Palaiseau, France \\ 
%\email{bmartin@lix.polytechnique.fr}\\
%\email{goubault@lix.polytechnique.fr}\\
%\email{putot@lix.polytechnique.fr}
%\and INRIA, Rennes, France\\
%\email{khalil.ghorbal@inria.fr}}

%\numberofauthors{4} 
%
%\author{
%\alignauthor 
%{\sf\large Benjamin Martin} \\ 
%    \affaddr{\normalsize\sf LIX, Ecole Polytechnique, Palaiseau, France}\\
%    \email{\normalsize\sf bmartin@lix.polytechnique.fr}
%\alignauthor 
%{\sf\large Khalil Ghorbal} \\ 
%    \affaddr{\normalsize\sf INRIA, Rennes, France}\\
%    \email{\normalsize\sf khalil.ghorbal@inria.fr}
%\alignauthor {\sf\large Eric Goubault} \\
%       \affaddr{\normalsize\sf LIX, Ecole Polytechnique, Palaiseau, France}\\
%       \email{\normalsize\sf goubault@lix.polytechnique.fr}
%\and
%\alignauthor 
%{\sf\large Sylvie Putot} \\
%    \affaddr{\normalsize\sf LIX, Ecole Polytechique, Palaiseau, France}\\
%    \email{\normalsize\sf putot@lix.polytechnique.fr}
%
%}

\maketitle
\begin{abstract}
%In this case study paper,
We formally verify a hybrid control law designed to perform a \emph{station keeping} maneuver for 
a planar vehicle. 
Such maneuver requires that the vehicle reaches a neighborhood of its station in finite time and remains in it while waiting 
for further instructions.   
We model the dynamics as well as the control law as a hybrid program and  
formally verify both the reachability and safety properties involved. 
We highlight in particular the automated generation of invariant regions which turns out to be crucial 
in performing such verification. 
We use the theorem prover \keyx{} to discharge some of the generated proof obligations. 
%goals supported by the latest version of the tool. 
%that can be automatized
%in the current state of the tool. 

%\blue{The remaining parts open the way for possible extensions of the prover's capabilities.} \ForAuthors{asked to be removed}
%We use the hybrid system theorem prover \keyx{} to partially validate our findings. 
\end{abstract}

%\keywords{Invariant Generation, Theorem Proving, Reachability Analysis, Station Keeping Maneuver, Hybrid Control, Dubins Vehicle}

\section{Introduction}

Formal hybrid modelling languages such as hybrid automata \cite{Alur1993} or hybrid programs \cite{DBLP:journals/jar/Platzer08} offer a convenient way 
to describe a wide variety of hybrid systems. 
In this paper, we consider a piecewise continuous system where the continuous dynamics are subject to discrete switching. 
%The case study considered in this paper falls in this latter class.
%\blue{The case study considered in this paper falls in this class of systems modelled through an hybrid program.}  
The plant part is modeled as a Dubins vehicle, that is a vehicle describing planar circular curves at a constant speed. 
The heading of the vehicle respects a hybrid control law, here taken from~\cite{Jaulin2013}, designed for station keeping maneuver. 
This means that the vehicle is expected to reach a neighborhood of its station in finite time and remains in it 
as long as it is not asked to do differently. 
Possible applications for such maneuvers are e.g. an autonomous sailboat that needs to anchor around a GPS position waiting to be picked up, or an autonomous drone that needs to keep a station at a given position while waiting for further instructions. 

Our goal is to formally verify the given control law while investigating to which extent such verification could be automated. 
In particular, 
%
%\ref{algebraicinvariants}
we use recent symbolic computation techniques to automatically generate algebraic and semialgebraic invariant 
regions \cite{DBLP:conf/sas/MatringeMR10,Goubault2014}. 
Such regions are then exploited to formally verify the reachability and safety properties of the station keeping maneuver using 
the hybrid theorem prover \keyx~\cite{DBLP:conf/cade/FultonMQVP15}. 
We compare our findings with the results from~\cite{Jaulin2013} where another proof is conducted by means of guaranteed numerical methods.

%\section{\blue{Preliminaries}}

\section{The Station Keeping Maneuver}
%\section{Problem statement : the station keeping maneuver}

The Dubins vehicle in Cartesian coordinates is described by the following system:
\begin{equation}
    \label{eq:dubcarcart}
    \left\{ 
        \begin{array}{rcl}
            \dot{x} & = & \cos(\theta) \\
            \dot{y} & = & \sin(\theta) \\
            \dot{\theta} & = & u 
        \end{array}
    \right. ,
\end{equation}
where $(x,y,\theta)$ defines the pose of the vehicle composed of its position in the plane $(x,y)$ as well as its 
heading angle $\theta$. 
The vehicle is always moving at a constant speed (here fixed to $1$). 
The variable $u$ encodes an input control that affects directly the heading's angular velocity.
Following~\cite{Jaulin2013}, we consider the above plant model in the polar coordinates $(d,\varphi,\alpha)$ depicted in Figure \ref{fig:topolar}. The above ODE then becomes:  
\begin{equation}
    \label{eq:dubcarpol}
    \left\{
        \begin{array}{rcl}
            \dot{d} & = & -\cos(\cphi) \\
            \dot{\cphi} & = & \frac{\sin(\cphi)}{d}+u \\
            \dot{\alpha} & = & -\frac{\sin(\cphi)}{d}
        \end{array}
    \right. ,
\end{equation}
where radius $d$ is a positive real satisfying $d^2=x^2+y^2$ and the heading angle $\theta$ is linearly 
related to $\cphi$ and $\alpha$:  $\cphi - \theta + \alpha = \pi$. 
The angle $\cphi$ can be understood as a bearing which measures the angle of the head of the vehicle with respect to a given position of the plan (here the origin).

\begin{figure}
	\centering
	\includegraphics[width = 8.4cm]{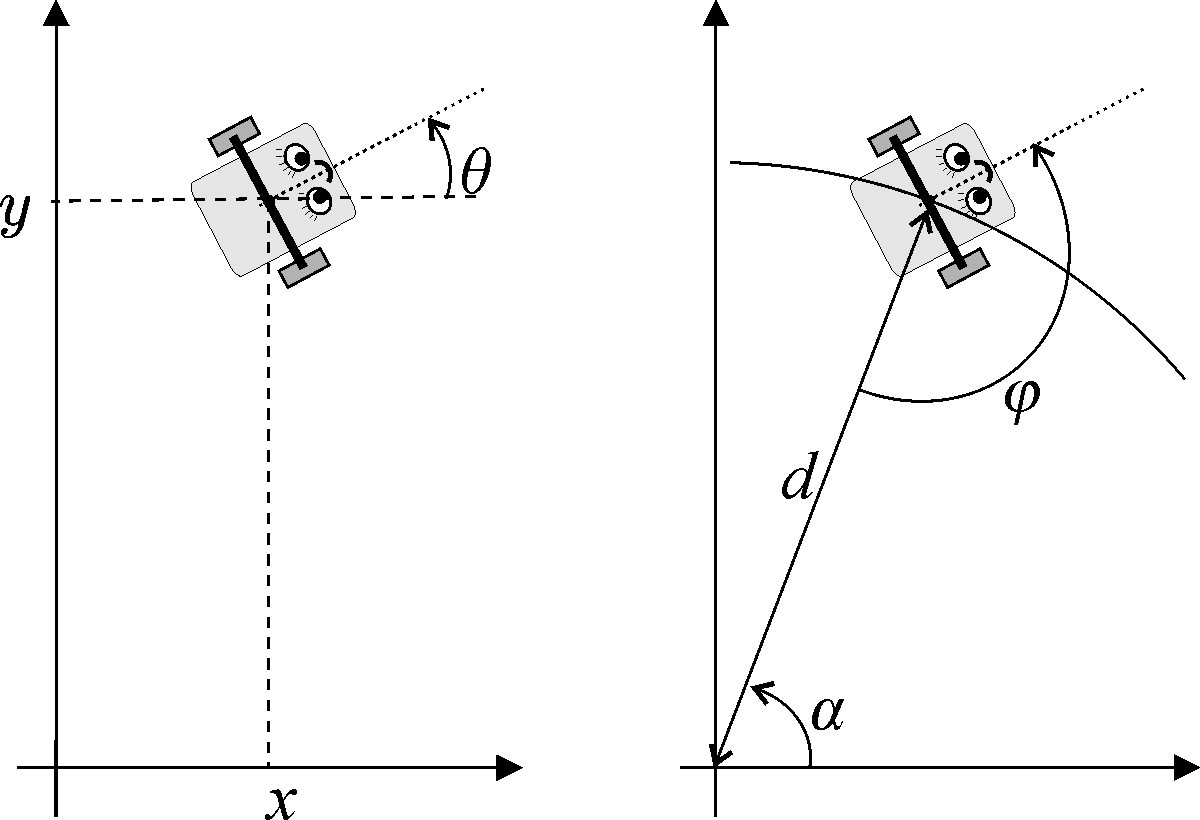}
    \caption{Cartesian coordinates (left) and polar coordinates (right). Courtesy to~\cite{Jaulin2013}.}
	\label{fig:topolar}
\end{figure}

The polar coordinates have in fact numerous advantages over the Cartesian coordinates. 
On one hand, one gets a decoupling of the state variables for free since the derivatives of $\cphi$ and $d$ are 
independent of $\alpha$ calling for a model reduction where only the states $(d,\cphi)$ are considered. 
On the other hand, since $\cphi$ appears only as a direct argument of the sine and cosine functions, one can 
restrict $\cphi$ to $[0,2\pi)$ with no loss of generality: the vector field is invariant under the action 
of the transitive additive group that takes $\cphi$ to $\cphi + 2k\pi$. 
Recall that the polar coordinates transformation presents a singularity at the origin $(x,y)=(0,0)$, where $d$ vanishes. 
We will go back to this issue later. 
First, we recall the piecewise control law for $u$ proposed in~\cite{Jaulin2013}.

\begin{equation}
\label{eq:claw}
		u=\left\{\begin{array}{ll}
                1 & \quad\mbox{if $\cos(\cphi)\leq \frac{\sqrt{2}}{2}$} \\
                -\sin(\cphi) &\quad \mbox{otherwise}
\end{array}\right. ,
		\end{equation}

The intuition is that the vehicle constantly turns left when it is not pointing towards the origin 
(modelled by $\cos(\cphi) \leq \frac{\sqrt{2}}{2}$), 
otherwise the input is proportional to the bearing $\cphi$ so as to push it towards 0, in which case the vehicle is moving towards the origin.
%\ForAuthors{KG: TODO. Benjamin, can you please explain better this ? proportional to what ? why pushing $\cphi$ to zero and not $d$ ?}

Combining the control law \eqref{eq:claw} together with the plant in polar coordinates \eqref{eq:dubcarpol}, 
one gets a switched system where two different dynamics, implied by two different controls, can be applied depending 
on the state of the system. 

%Given the control law for the $u$ given above, we note throughout the paper $f$, the vector field corresponding to the following $2$-dimensional ODE.  
%\begin{equation}
%\left(\dot{\cphi},\dot{d}\right) = \left(\frac{\sin \cphi}{d}+u, -\cos \cphi\right)
%\label{ODE}
%\end{equation}
%
%We would like to prove that this control law enables the vehicle to make its station keeping maneuver.
%We will say that the vehicle is in station keeping if its distance to the origin remains bounded by a relatively small value.

In the following sections we prove that (i) the vehicle reaches in a finite time a position at a reasonably 
short distance from a beacon positioned at the center of the coordinate system, and (ii) stays in that region for 
an indefinite time. 
To do so, we first perform a qualitative analysis of the two continuous dynamics obtained exhibiting 
interesting invariant regions.

\section{Generating Positive Algebraic Invariants}
\label{algebraicinvariants}
For convenience, we recall the formal definition of positive invariant sets.
Let $\phi(\vec{x}_0,.): \R \rightarrow \R^n$ denote the solution of the initial value problem 
$\dot{\vec{x}}=\vec{f}(\vec{x})$ for a given ODE $\vec{f}$ and initial value $\vec{x}_0$. 
Let $I \subset \R$ denote the maximal interval on which $\phi(\vec{x}_0,.)$ is defined. 
Recall that $I$ need not be the entire real line and that, depending on $\vec{f}$, the solution may be defined on a bounded interval. 
When $I$ is bounded, the system exhibits a finite time blow-up problem \cite{Ball1978189}, that is in general at least one variable diverges in finite time. 
Such problems are intimately related to the singularities of the solutions and are often hard to detect and characterize. 
We will carefully discuss and analyze such issues for our case study. 
Notice, however, that to the best of our knowledge there are currently no automated methods to detect whether $I$ is bounded or not 
where non-linear dynamics are involved (when $\vec{f}$ is linear, $I=\R$). 
%not be concerned in formally proving that $I$ is unbounded, that is, all our proofs are correct as long as $t$ remains within $I$. 
%Nevertheless, given the importance of such issues for the designer, we will manually   
%Notice, however, that detecting such issues for a given $\vec{f}$ is highly valuable for the designer. 

\begin{definition}
\label{inv}
A set $S \subseteq \R^n$ is \emph{positive invariant} for $\vec{f}$ if and only if for any $\vrx_0 \in S$, the 
corresponding solution $\phi(\vrx_0,\cdot)$ satisfies $\phi(\vrx_0,t) \in S$ for all $t \in [0,+\infty) \cap I$, that is 
for all non-negative time $t$ as long as the solution is defined. 
\end{definition}
In order to verify that a set $S$ is positive invariant for an ODE, it is enough to show that the flow $\vec{f}$ is entering, constant 
or inner tangential on the boundaries of $S$. 
When $S$ is semi-algebraic (that is defined by boolean combinations of polynomial equalities and inequalities), this 
can be done by checking the sign of the Lie derivatives (and, if necessary, higher-order Lie derivatives) 
of the active boundaries of $S$ \cite{DBLP:conf/emsoft/LiuZZ11}.
%\ForAuthors{Define "inner tangential" and "active boundaries", both seems natural} 

Recently in~\cite{DBLP:conf/sas/MatringeMR10,Ghorbal2014,DBLP:conf/emsoft/LiuZZ11,DBLP:conf/hybrid/Sankaranarayanan10,Goubault2014}, 
many effective methods for constructing algebraic and semi-algebraic positive invariant sets have been proposed. 
Those methods apply, however, only to polynomial vector fields. 
We thus start by transforming the system \eqref{eq:dubcarpol} into a polynomial differential system. 
This could be done classically by adding fresh variables corresponding to the transcendental functions. 
The so obtained dynamics is a sound approximation (abstraction) of the original dynamics that one could refine by re-introducing 
the links between the extra variables and the hidden transcendental functions they represent~\cite{Carothers}. 
For our dynamics, one obtains the following algebraic system~:
\begin{equation}
\label{eq:syspoly}
\left\{\begin{array}{rcl}
\dot{g} & = & -(he+u)h \\
\dot{h} & = & (he+u)g \\
\dot{e} & = & ge^2 \\ 
\dot{d} & = & -g \\
\dot{\cphi}& =& he+u
\end{array}\right. ,
\end{equation}
where the variable $g$ encodes cosine function $\cos(\cphi)$, $h$ the sine function $\sin(\cphi)$ and $e$ the inverse $\tfrac{1}{d}$. 
For the abstraction to be precise enough, one has to respect those functions initially, for instance if 
the initial values of $g$ and $h$ are fixed, then the initial value of $\phi$ is entirely determined. 
Likewise, if $d$ is fixed initially, then the initial value of $e$ is fixed and is equal to the inverse of $d$. 
%These conditions, so called \emph{coherence} conditions, can be partially satisfied by restricting the initial values 
%of the variables $(g,h,e,d,\phi)$ to satisfy $d e = 1$ and $g^2+h^2=1$
%Note that we have by construction two algebraic invariant equations, namely $g^2 + h^2 = 1$ and $d e  = 1$.  
In this case study, the control law $u$, as well as the switching conditions (cf. \eqref{eq:claw}), are expressible directly 
with the extra variables $g$, $h$, and $e$. Therefore, the control law as well as the plant could be rewritten entirely algebraically.  

%In~\cite{Goubault2014}, Darboux polynomials and positive polynomials are used to construct rational or logarithmic differential variant functions. 
%We follow these methods to infer positive invariants in the case of the station keeping problem, and check them formally within \keyx. 
%First, we recap some of the
%necessary concepts from these articles. 

\paragraph{Darboux polynomials and positive invariants}

Darboux polynomials, and more generally the Darboux criterion, are fundamental building blocks for the qualitative analysis 
of ODE and hence invariant generation. 
They are also at the heart of symbolic integration methods \cite{goriely2001integrability,Man93a}. 

\begin{definition}
\label{Darboux}
Let $\dot{\vec{x}}=\vec{f}(\vec{x})$ denote an ODE. 
A polynomial $p$ is \emph{Darboux} for $\vec{f}$
if and only if 
$$\dot{p}= c p$$
where $c \in \reals[\vec{x}]$ is a polynomial, called cofactor and 
where $\dot{p}$ denotes the time derivative of the polynomial $p$ with respect to $\vec{f}$. 
\end{definition}

The polynomial $\dot{p}$ is also known as the \emph{Lie derivative} of $p$ with respect to $\vec{f}$ and is formally defined as 
\[
    \left\langle \nabla p,\vec{f}\right\rangle,
\]
where $\nabla p$ is the gradient of $p$ and $\langle .,. \rangle$ is the standard inner
product on $\R^n$.

Searching for Darboux polynomials, up to a given fixed degree, can be performed algorithmically \cite{DBLP:conf/sas/MatringeMR10,Ghorbal2014} by 
deriving all the constraints that the unknown coefficients of a   polynomial (or template) have to satisfy and then solve the so obtained system.  

We were able to exploit such techniques to automatically generate Darboux polynomials of \rref{eq:syspoly}. 
For a better performance, we restricted ourselves to the three dimensional dynamics in $(g,h,e)$ since they define a closed 
form ODE on their own. 
\rref{tab:dbx} summarizes our findings depending on the selected control. 
From there, we recovered other Darboux polynomials for the five dimensional system by exploiting the algebraic invariant $de=1$ 
known to be satisfied by construction. 
%We can note also that when $u=-h$, $d h$ is a Darboux polynomial for the $5$-dimensional system. 
%Of course, one could attempt, at an extra cost, to automatically generate such polynomials for the bigger system. 
%This, nevertheless, won't be as efficient as considering lower dimensional subcomponents as we did here. 

\begin{table}
    \centering
    \begin{tabular}{c@{\hskip .5in}c@{\hskip .5in}c}
        Control & Darboux Polynomial & Cofactor \\
        \hline 
        \multirow{ 2}{*}{$u=1$}       & $e$           & $g e$     \\
                                      & $1+2 e h$     & $2 g e$   \\
        \hline
        \multirow{ 2}{*}{$u=-h$}      & $e$           & $ge$      \\
                                      & $h$           & $(e-1)g$  \\
    \end{tabular}
    \caption{Darboux polynomials for \rref{eq:syspoly}.}
    \label{tab:dbx}
\end{table}

%% KG: I don't see the interest of citing this lemma
\begin{comment}
Note that we will only be using the following result from \cite{Goubault2014}~:

\begin{lemma}\label{Lemma:rational-diff-var}
Let $g = \frac{p}{q}$ be a rational function
such that (a) $q$ is Darboux with cofactor $c$ and (b) $\alpha q$ 
is a positive polynomial where $\alpha = \lie_f(p) - cp$.
Then $g$ is increasing along the solutions of our ODE. 
\end{lemma}

An increasing function $g$ on all solutions of an ODE can be used to prove reachability
or invariance properties:  
in particular
$\{ \vx
| g(\vx) \geq 0 \}$ is a positive invariant.
The opposite of a Lyapunov function (which is used to 
show stability or convergence of an ODE) is indeed increasing
on all trajectories. 
\end{comment}

From \rref{tab:dbx}, in the constant control mode ($u=1$), we observe that the cofactors of the 
two Darboux polynomials are the same up to a multiplication by the integer $2$. 
This suggests the following rational invariant function for this mode: 
\begin{equation}
	\label{eq:vcst}
	\Vcst := \frac{1+2 e h}{e^2} ,
\end{equation}
obtained by first matching the two cofactors by raising the power of the Darboux polynomial $e$ to match 
the multiplicative integer $2$ (since the cofactor of $e^2$ is twice the cofactor of $e$)  
and then dividing the two Darboux polynomials with the same cofactor, namely $1+2eh$ and $e^2$. 
Such relation between Darboux polynomials and rational invariant functions is well known in the literature \cite{goriely2001integrability}. 
One can indeed easily check that the Lie derivative of $\Vcst$ vanishes for all $t$, that is $\dVcst = 0$, as long as the control 
input remains equal to $1$. 
Moreover, since $de=1$ by construction of $e$, we have a polynomial equivalent formula for $\Vcst$, namely $d^2 + 2 d h$.

For the proportional mode, when $u=-h$, it turns out that the time derivative of $\Vcst$ keeps a constant sign:
\[
    \dVcst = -\frac{2g(h+1)}{e} \leq 0 \enspace .
\]
This is because $h+1 \geq 0$ (recall that $h$ is defined as a sine function) and $g > \tfrac{\sqrt{2}}{2}$ by definition of $g$ 
and the considered control law \eqref{eq:claw}. 

Figure \ref{fig:propctl} shows the region $\Vcst \leq 0$ (depicted in blue).  
Observe in particular that it contains the equilibrium of the system $(d,\cphi)=(1,\tfrac{3\pi}{2})$ and that $d$ 
is upper bounded by $2$, meaning that the vehicle is at a fairly close distance to the origin of the Cartesian coordinate system.  
Thus, the set $\Vcst \leq 0$ seems to be a good station keeping candidate for the switched system. 
\begin{figure}[tb]
    \centering
    \includegraphics[width=0.65\textwidth]{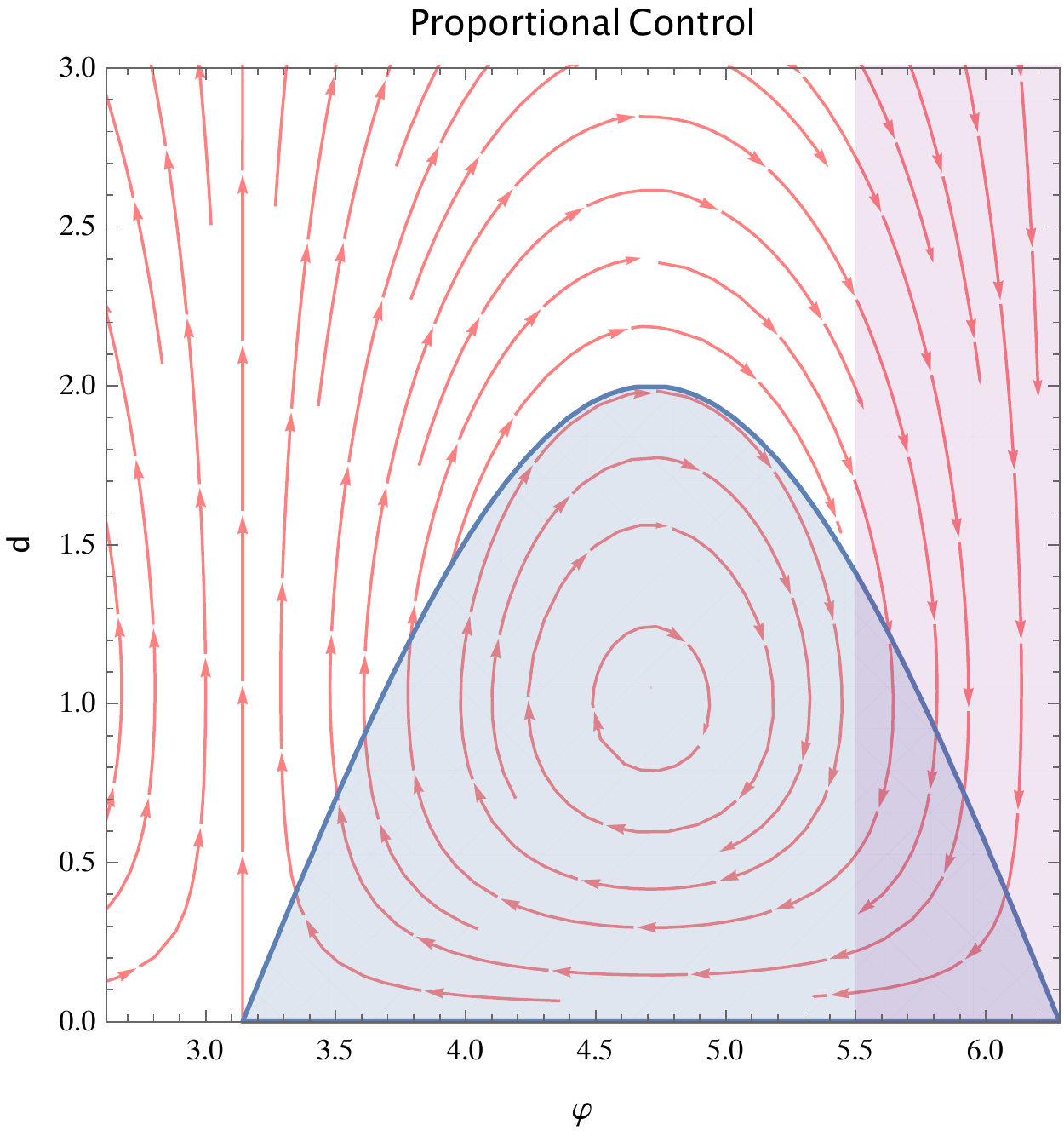}
    \caption{The set $\Vcst \leq 0$ is depicted in blue.} 
    \label{fig:propctl}
\end{figure}

In the next sections, we formally prove that $\Vcst \leq 0$ is an invariant set for the switched system and 
more importantly that it is reachable from any initial condition of the vehicle provided that initially $d>0$ and $\cphi \neq 0$. 

%% KG: Je ne vois pas l'interet de parler de ces invariants ici.
\begin{comment}
Note that we can also use the Darboux polynomials to generate invariants involving transcendental functions~\cite{Goubault2014}. 
This is particularly useful here, in the case of the proportional control
where we can find automatically the following logarithmic invariant function~:
\begin{equation}
	\label{eq:vpro}
	\Vpro := \left\{ \begin{array}{cl}
		\log(-d h) -d & \text{ if } h <0\\
		\log(d h)-d & \text{ if } h >0 \\
	\end{array}\right.
\end{equation} 
This function will actually not be necessary for defining an invariant for the switched system that can be associated to a station keeping maneuver. It can be used though in the proof of 
reachability of Section \ref{sec:reachability}, but this alternate proof is omitted, for simplicity's
sake.
\end{comment}
%\ForAuthors{Reference a la discussion sur l'attracteur (Section 5 remarque 3 ?}

\section{Safety Analysis}

\begin{comment}
Let us come back now to our rational invariant function in the constant control mode. 
\rref{fig:phasespace}, region \ding{193} (constant control), illustrates several levels of the function $\Vcst$.
It shows that the level sets $\Vcst\leq \lambda$ cover the entire phase space when $\lambda$ vary between $-1$ and $+\infty$. 
In particular, %region in the phase space cut out a small subspace of the space of states of our system, around the origin
when $\lambda \leq 0$, the level sets define a bounded region around an equilibrium of the system at $(\cphi,d)=(\tfrac{3\pi}{2},1)$. 
These subspaces are furthermore positive invariant sets since $\dVcst=0$. 
We thus select the level set with $\lambda=0$ as a good candidate for an invariant set for our switched system 
(and not only for the constant control portion, for which it is obviously a positive invariant set by definition). 
\begin{equation}
    \label{eq:invcst}
\invcst := \{x\in H: \Vcst(x) \leq 0\}
\end{equation}
%\noindent will be shown to be a positive invariant set in the constant control mode~: 
In that set, the robot seems to stay indefinitely close to the equilibrium which is at a fairly close distance to the origin of the Cartesian coordinates ($d=1$)\footnote{The robot is actually not idle in that region since it is turning around the equilibrium. We do not consider such aspects in this work. 
}.%but this more involved proof is omitted here.}.
\end{comment}

In this section, we will first model the switched system as a hybrid program, then prove the invariance of $\Vcst \leq 0$. 

%% khalil: Modele plus proche de l'execution, a voir si on le garde apres
\begin{comment}
\begin{definition}[Hybrid Program Model]
    \label{defn:hybridpgrbis}
\begin{equation*}
    %\label{eq:safetymodel}
\alpha :=  
\left\{
    \left(2g\leq \sqrt{2}\; ?;\ u:=1\right) %{\mathit Dyn}\;\&\; g\sqrt{2} \leq 1)
    \;\cup\;
    \left(2g>\sqrt{2}\; ?;\ u:=-h\right) ; 
    \left\{\operatorname{Plant}\;\&\; d>0 \land H\right\}
    \right\}^* %\;] \ \Vcst \leq 0
\end{equation*}
where 
\[
    H := (u-1)\left(2g - \sqrt{2}\right) \leq 0 \land (u+h)\left(2g-\sqrt{2}\right) \leq 0 \ ,
\]
the $\operatorname{Plant}$ dynamics are defined as in \eqref{eq:syspoly} .
%and $\Delta$ is defined as the conditions $\left( g^2+h^2=1 \land e d = 1 \land d>0 \right)$.  
\end{definition}
\end{comment}

\begin{definition}[Hybrid Program Model]
    \label{defn:hybridpgr}
\begin{equation*}
\alpha :=  
\left\{
 \left\{\operatorname{Plant}_{\mid u=1}\;\&\; d>0 \land 2g\leq \sqrt{2} \right\}\; \cup\;  \left\{\operatorname{Plant}_{\mid u=-h}\;\&\; d>0 \land 2g>\sqrt{2}\right\}
 \right\}^*
\end{equation*}
where the $\operatorname{Plant}$ dynamics are defined as in \eqref{eq:syspoly}. 
%and $\Delta$ is defined as the conditions $\left( g^2+h^2=1 \land e d = 1 \land d>0 \right)$.  
\end{definition}

%\blue{The hybrid program $\alpha$ in Definition \ref{defn:hybridpgr} is as one expects:
%two different dynamics depending on their evolution domain and corresponding choice of $u$ -following \eqref{eq:claw} - are considered. The entire feedback loop runs for any non negative number of iterations, modeled by the star $\{\}^*$.  }

%\blue{The conditions $\Delta$ used in the the evolution domain encodes both the validity of the polar coordinates rewriting by ensuring $d>0$ and the coherence of the variables $(g,h,\cphi,d,e)$. These conditions are also used as initial conditions in the sequel. In particular, once $g$ and $h$ are such that $g^2+h^2=1$, $\cphi \in [0,2\pi[$ is known and is such that $\cos(\cphi)=g$ and $\sin(\cphi)=h$, although its value is not given explicitly. \khalil{This is a weakness currently. One can remove the dimension $\cphi$ altogether but we have a neat progress function involving $\cphi$ later. I don't have a good fix for now.}
%The variable $e$ is entirely determined via the equation $e d = 1$ as soon as a positive $d$ is chosen.}

The hybrid program $\alpha$ in Definition \ref{defn:hybridpgr} shows a piece-wise continuous system that 
models the behavior of the vehicle when the control law is applied. 
The entire feedback loop runs for any non-negative number of iterations, modeled by the star $\{\}^*$. 
The loop is made of an non-deterministic choice (modelled by the operator $\cup$) between the dynamics induced by the two possible controls ($\operatorname{Plant}_{\mid u=1}$ and $\operatorname{Plant}_{\mid u=-h}$). Any dynamics are applied as long as the states remains within the evolution domain given by the conditions after the \& symbol. Here, the different conditions on $g$ imposes to follow the control law given by \eqref{eq:claw}.
The condition $d>0$, present in both evolution domains, ensures that the polar coordinates rewriting is valid. 

In the sequel, we will be using $\Delta := \left( g^2+h^2=1 \land e d = 1 \land d>0 \right)$ to encode the fact that the 
initial value of the variables $(g,h,\cphi,d,e)$ is coherent, that is 
once $g$ and $h$ are fixed such that $g^2+h^2=1$, $\cphi \in [0,2\pi[$ is known and is such that $\cos(\cphi)=g$ and $\sin(\cphi)=h$, although its value is not given explicitly. 
%        \khalil{This is a weakness currently. One can remove the dimension $\cphi$ altogether but we have a neat progress function involving $\cphi$ later. I don't have a good fix for now.}
The variable $e$ is entirely determined via the equation $d e = 1$ as soon as a positive $d$ is chosen. 
%
%\begin{proposition}[Well-defined Control Law]
%\label{prop:welldefined}
%\begin{equation*}
%d>0
%\rightarrow
%\;[ \alpha ] \ d>0
%\end{equation*}
%\end{proposition}
%\benjamin{Il faut aussi $\cphi \neq 0$ sinon c'est impossible à prouver. Le prouver directement n'est pas trivial non plus: il vaut mieux partir de la decomposition faite dans la partie reachability: les regions 1, 2, et 3 peuvent etre check facilement (parce que $\Vcst > 0$ est invariant dans 2, $dh$ Darboux dans 1 et 3). Pour 4 il faut raisonner par morceaux: la partie qui subit la commande $Pctl$ pas de soucis ($dh$ Darboux), le reste il faut se concentrer sur $\Vcst = 0$ et regarder comment evolue $d$ ($d$ croissant quand $\cphi \in [\pi/2,3\pi/2]$, ou $d>0$ sinon). }
%\khalil{ de =1 est dans H et e est Darboux dans les deux modes, donc e>0, d'ou d>0. Where is the error ? In fact what you wanted to say is that there is a finite time explosion problem}
We are now ready to formally state the positive invariance of the region $\Vcst \leq 0$. 
\begin{theorem}[$\Vcst \leq 0$ is a Positive Invariant]
\label{thm:safety}
\[
\Vcst \leq 0 \; \land \; \Delta
\rightarrow
\;[ \alpha ] \ \Vcst \leq 0
\]
\end{theorem}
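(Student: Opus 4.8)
The plan is to reduce the box-modality over the loop to a loop-induction argument, then to split the loop body along its non-deterministic choice, and finally to discharge each continuous mode by differential induction. I would take as loop invariant the entire left-hand side of the implication, namely $J := \Vcst \leq 0 \land \Delta$. With this choice the two routine premises of the loop rule hold immediately: the precondition is literally $J$, and $J$ trivially entails the postcondition $\Vcst \leq 0$. All the work therefore concentrates in the inductive step $J \rightarrow [\,\text{body}\,]\,J$, where $\text{body}$ is the non-deterministic choice between the two continuous systems.

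By the non-deterministic choice rule, $[\beta \cup \gamma]\,J$ is equivalent to $[\beta]\,J \land [\gamma]\,J$, so the inductive step splits into two independent continuous problems, one per control mode. In each mode I would first re-establish the algebraic component $\Delta$ of the invariant through a sequence of differential cuts. The equality $g^2+h^2=1$ is a conserved quantity, since its Lie derivative vanishes identically in both modes (the $u$-terms cancel), and is thus proved directly by differential induction; the equality $e d = 1$ is a Darboux equality with cofactor $g e$, i.e. $\tfrac{d}{dt}(ed-1)=ge\,(ed-1)$, and is proved by the Darboux rule; and $d>0$ is supplied for free by the evolution domain. After these cuts, $g^2+h^2=1$, $ed=1$, and $d>0$ are all available in the evolution domain for the remaining reasoning.

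The central step is then the invariance of $\Vcst \leq 0$. Rather than differentiate the rational expression $\Vcst = (1+2eh)/e^2$ directly, I would use its polynomial form $\Vcst = d^2 + 2 d h$, which is equivalent under $ed=1 \land d>0$ (now part of the domain) and whose Lie derivative along \eqref{eq:syspoly} collapses, using $ed=1$, to $2\,d\,g\,(u-1)$. In the constant mode $u=1$ this derivative is identically $0$, so $\Vcst$ is conserved and $\Vcst \leq 0$ is a differential invariant. In the proportional mode $u=-h$ the derivative equals $-2\,d\,g\,(h+1)$, which is $\leq 0$ because $d>0$ and $g>\tfrac{\sqrt{2}}{2}>0$ come from the evolution domain while $h+1\geq 0$ follows from $g^2+h^2=1$; differential induction then closes this case too.

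The main obstacle I anticipate is exactly the interplay between the rational invariant function and the polynomial abstraction: $\Vcst \leq 0$ can be treated soundly only \emph{after} the algebraic facts in $\Delta$ have themselves been certified as differential invariants, because $ed=1$ is what both legitimizes the rewriting $\Vcst = d^2+2dh$ and simplifies the Lie derivative, while $g^2+h^2=1$ is what yields the crucial sign bound $h+1\geq 0$ in the proportional mode. Once these cuts are installed, the residual sign conditions are elementary real-arithmetic facts dischargeable by quantifier elimination, so the genuine difficulty is \textbf{orchestrating the differential cuts in the correct order} rather than any single hard computation.
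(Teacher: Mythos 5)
Your proposal is correct and follows essentially the same route as the paper: it works with the polynomial form $d^2+2dh$, splits the loop body along the non-deterministic choice, and discharges each mode by differential induction using exactly the paper's key computation $\dVcst = 2dg(u-1)$, which vanishes for $u=1$ and equals $-2dg(h+1)\leq 0$ for $u=-h$. The only divergence is a mechanization detail: you invoke a Darboux rule to certify $ed=1$ as a differential cut, whereas the paper, noting that no such rule was then available in \keyx{}, obtains the same fact by adding $de-1=0$ to the evolution domain --- mathematically the two are equivalent, and your explicit derivation of $h+1\geq 0$ from the cut $g^2+h^2=1$ is in fact a cleaner formal justification than the paper's appeal to $h$ being a sine.
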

%
%\blue{
The box modality around $\alpha$ means that for all runs of the hybrid program, the following post-condition must be true. When \rref{thm:safety} is written as a hybrid program in \keyx{}\footnote{Source files for this hybrid program are available at the following link \url{http://ben-martin.fr/publications}}, we use the polynomial form of $\Vcst = d^2 + 2dh$, which is a valid rewriting of the rational form if $de-1=0$ is an invariant of the hybrid system.
If one assumes that $de-1= 0$ holds initially (as this is the case here) then it is possible to prove that it holds for all time. Indeed the polynomial $d e - 1$ is a Darboux polynomial for the dynamics defined in \eqref{eq:syspoly} for all inputs $u$. This fact cannot be proved currently within the theorem prover \keyx{} as a proof rule based on the Darboux criterion is not yet available. 
Therefore, it is currently necessary to add the conditions $de-1=0$ into the evolution domain in order to complete the proof in \keyx{}. 
The proof itself then is mostly based on the differential invariant (DI) proof rule~\cite{DBLP:journals/jar/Platzer08}, which 
is essentially a conservative lifting of barrier certificates~\cite{Prajna+Jadbabaie/2004/Safety} to the boolean connectives. 
For convenience, we give in \rref{eq:DI} the conditions required for barrier certificates, or likewise the premises of the proof rule DI for 
the simple case of region of the form $p\leq0$. 
%}
%
%If the equation $d e-1=0$ is assumed to hold for all time for the hybrid program $\alpha$, then the 
%formal proof requires only the differential invariant (DI) proof rule~\cite{DBLP:journals/jar/Platzer08}, which 
%is essentially a conservative lifting of barrier certificates~\cite{Prajna+Jadbabaie/2004/Safety} to the boolean connectives. 
%For convenience, we give in \rref{eq:DI} the conditions required for barrier certificates, or likewise the premises of the proof rule DI for 
%the simple case of region of the form $p\leq0$. 
%
\begin{equation}
    \label{eq:DI}
    (DI)\frac{\forall x. ([x':=f(x)]\dot{p} \leq 0)}{p\leq0 \rightarrow [\dot{x}=f(x) \& H] p\leq 0}
\end{equation}
In other words, the invariance of $p\leq 0$ can be deduced if its time (Lie) derivative is negative.
The reason (DI) succeeds is due to the fact that when $g> \frac{\sqrt{2}}{{2}}$ and the proportional control is applied, one gets 
$\dVcst = -2gd(h+1)$, which is negative since $g$ is positive, $d$ is positive, and $h$, as a cosine, is greater or equal to $-1$.
%$\dVcst = -\frac{2g(h+1)}{e}$, 
%which is positive since $g$ is positive, $e$ is positive (as an inverse
%of a distance), and $h$, as a cosine, is greater or equal than $-1$. 
%
The set $\Vcst \leq 0$ is thus not only a positive invariant when the constant control 
is applied but also for the proportional control when applied according to the control law \rref{eq:claw}. 
The output obtained with \keyx{} is shown on~\rref{fig:keyF}.

\begin{figure}[tb]
	\centering
	\includegraphics[width=0.65\textwidth]{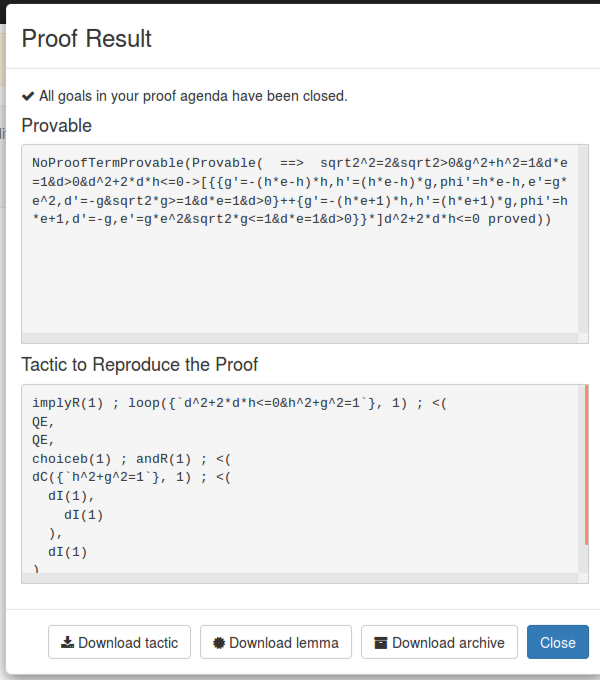}
	\caption{\keyx{}: last window showing the proof of \rref{thm:safety} is concluded.}
	\label{fig:keyF}
\end{figure}

\section{Reachability Analysis}
\label{sec:reachability}

We prove in this section that the region $\Vcst \leq 0$ is reachable from (almost) anywhere in the phase space. 
We use the same hybrid program model given in Definition \ref{defn:hybridpgr}.  
We first discuss the special case $\cphi=0$ (\rref{sec:special}), and then the generic case where $\cphi > 0$ (\rref{sec:generic}). 

\subsection{The Special Case (\texorpdfstring{$\cphi=0$}{Singular})} 
\label{sec:special}
When $\cphi=0$ initially, then $g$ and $h$ have to be instantiated to $1$ and $0$ respectively. 
This configuration sets the control input $u$ to $-h$ and one can see from Table \ref{tab:dbx} that $h$ is a Darboux polynomial 
and therefore $h=0$ is an invariant equation as long as the system evolves following the $\operatorname{Plant}$ equations (cf. \eqref{eq:syspoly}) with the domain $d>0$. 
In this particular case, and since $h=0$ is invariant, the set of equations simplifies to
\begin{equation}
    \label{eq:syspolySpecial}
    \left\{\begin{array}{rcl}
            \dot{g} & = & 0 \\
            \dot{h} & = & 0 \\
            \dot{e} & = & e^2 \\ 
            \dot{d} & = & -1 \\
            \dot{\cphi}& =& 0
        \end{array}\right. ,
    \end{equation}
meaning that the bearing angle $\cphi$, as well as its cosine $g$, will remain constants. 
Subsequently, this also means that the control input $u$ will also remain fixed to $-h$, that is $0$. 
The vehicle shows here an interesting behavior as the time derivative for $d$ is strictly decreasing: 
the vehicle is heading straight to the origin $(x,y)=(0,0)$. 
However, because of the evolution domain constraint $d>0$, the dynamics will be followed as long as this constraint is not violated. 
But then, by design, the system is forced again to execute the differential equation which only makes $d$ closer and closer to $0$ 
(and thus $e$ closer to infinity because of $e d=1$ is an invariant). 
We have here in fact a finite time explosion problem: if the system starts at a distance $d_0$ from the origin, with $\cphi_0=0$, the dynamics are only defined for $t \in [0,d_0[$ and the maximal interval of definition is upper bounded.  
At $t=d_0$, the model hits the singularity of the polar coordinates transformation and it is no longer valid as is. 
A careful analysis shows that, right after the singularity, $d$ remains an infinitesimal (and is thus continuous as one expects) and $\cphi$ is discontinuous 
as it jumps from $0$ to $\pi$ switching the control input from $-h$ to $1$. 
This discontinuity comes in fact from switching the direction of the radius vector (of magnitude $d$) and does affect neither the position nor the heading of the vehicle. 
The vehicle follows therefore a new trajectory with $d$ very small but positive and $\cphi = \pi$. 
This new initial position is part of the generic case discussed in the next section.

%\khalil{Benjamin, I prefer commenting out entirely the next paragraph for now. Let me know if you agree.}
%\blue{
%It is important to note that $d > 0$ can only be violated from $\cphi = 0\;[2\pi]$, i.e. $h=0$ and $g = 1$, and thus the solutions are defined for all $t\geq0$ if we exclude this case. We can see this informally as follows:
%\begin{itemize}
%\item The condition $d h \neq 0$ in proportional control ($\sqrt{2} g\geq 1$) is invariant because $dh$ is Darboux, which implies $d>0$ and $h \neq 0$ are never violated. This also implies that $h=0$ and $g = 1$ cannot be reached.
%\item In constant control, we can see that the invariant function $\Vcst$ tend to $0$ when $d$ tend to $0$, thus when $e$ tend to $\infty$. Then starting with $\Vcst \neq 0$ cannot lead to reaching the singularity. When $d>0$ and $\Vcst = 0$, (1) if $g \leq 0$ then $d$ is increasing and $d>0$ always satisfied; (2) if $1>\sqrt{2} g > 0$ and $1 > h >\sqrt{2}/2$ then $\Vcst:= d^2+2dh >0$ which is a contradiction; (3) if $1>\sqrt{2}g >0$ and $-1 <  h < -\sqrt{2}/2$, then $\Vcst = 0$ implies $d = -2h$ and thus $d >0$.
%\end{itemize}
%It is currently not possible to formally prove the invariance of $d>0$ for all $t \geq 0$ because invariance theorems are only valid for the maximal interval of definition of the solutions. Due to the maintenance of the coherence of the variables, having $ed=1$ imposes the existence of a singularity at $d=0$ which allows to trivially prove that $d>0$ is invariant without precision on the interval of definition of the solutions. 
%}

\subsection{The Generic Case (\texorpdfstring{$\cphi>0$}{Non Singular})} 
\label{sec:generic}
The phase space is now restricted to $(0 < \cphi < 2\pi)$. 
We assume that in this case all trajectories are defined for all $t \geq0$. 
%
%\khalil{We need to say something about the fact that in this case the solution is defined for all $t\geq 0$}
%\benjamin{Not easy to formalize in particular because we are using $p:=d$ as progress function whose boundary $d=0$ is the source of the singularity (i.e. $e$ is undefined, growing to infinity in finite time). There is one intuition that can be used: as the evolution of $h$ and $g$ depends on $e$, they are also both affected by the singularity. In particular in regions of proportional control, this makes $h$ and $g$ to hit their boundary before the singularity happens. The only (but excluded) exception is when $h= 0$ in which case $h$ and $g$ are not evolving anymore
%Trying to develop this analysis turns out to bring another way of showing the reachability via bounded functions of time (and showing, in terms of the initial conditions, the timings at which a given region is exited).  }
%
Our reachability analysis exploits the recent invariant-based liveness proof rule, $(SP)$, introduced by Sogokon 
and Jackson in~\cite[Proposition 10]{Sogokon2015}. 
The idea is to use special invariant sets, so called \emph{staging sets}, that contains the initial set and from which the system can only escape to go to the target set one wants to prove reachable. 
To further prove that the system will eventually leave the staging set in finite time, a progress function must be also supplied. 
These two ingredients are sufficient to prove that \emph{any} trajectory starting in the staging set will eventually reach the target set in finite time. 
The $(SP)$ rule is defined as follows: $X_T \subset \R^n$ is the target set we want to prove reachable in finite time and $X_0 \subset \R^n$ is the initial set. 
The premises of the proof rule are sufficient to prove its conclusion under the assumption that the solution is defined for as long as needed to reach the target set. 
\begin{equation}
    \label{eq:EvenR}
    (\operatorname{SP})\frac{\begin{array}{c@{\hskip 20pt}c@{\hskip 20pt}c}\multicolumn{3}{c}{\vdash \exists \epsilon > 0.\ \forall \vec{x}.\ S \rightarrow (p \geq 0 \land \dot p \leq -\epsilon)} \\[10pt] 
    X_0 \land \lnot X_T \vdash S & \vdash S \rightarrow [\dot{\vec{x}}=f(\vec{x})\; \&\; \lnot (H \land X_T)] S & X_0 \lor S \vdash H\\[2pt] \end{array}}{\vdash X_0 \rightarrow \langle \dot{\vec{x}}=f(\vec{x})\; \&\; H \rangle X_T}
\end{equation}
The diamond modality around the hybrid program means that at least one run leads to satisfying the post-conditions (liveliness).
The proof rule has four premises and relies essentially on a strictly decreasing progress function $p$ within an invariant set $S$. 
The progress of $p$ is proven using the positive real number $\epsilon$. 
The intuition of the proof rule $(\operatorname{SP})$ is that if there exists a bounded from below and decreasing function along the trajectories in an 
invariant region with respect to certain evolution domain constraint, then the flow cannot stay indefinitely inside and must eventually exit the region described by those constraints.

The aim of this section is to prove that the region defined by $\Vcst \leq 0$ is reachable from any generic initial position (i.e. excluding the case $\cphi=0$). 
Doing so with a single application of the $(SP)$ rule can be cumbersome. Therefore, we suggest to build a chain of staging sets that prove reachability of intermediate regions, the chain leading to the reachability of $\Vcst \leq 0$.
In what follows, we partition the phase space into $4$ regions as shown in \rref{fig:phasespace}: 
\begin{equation}
    \begin{split}
    \text{\ding{192}} &:= 0 < \cphi < \frac{\pi}{4} \;\land\; d>0 \\
    \text{\ding{193}} &:= \frac{\pi}{4} \leq \cphi \leq \frac{7\pi}{4} \;\land\; d>0 \;\land\; \Vcst > 0 \\ 
    \text{\ding{194}} &:= \frac{7\pi}{4} < \cphi < 2\pi \;\land\; d>0 \;\land\; \Vcst > 0 \\
    \text{\ding{195}} &:= \Vcst \leq 0
    \end{split}
    \label{eq:regions}
\end{equation}
\begin{figure}[tb]
    \centering
    \includegraphics[width=0.65\textwidth]{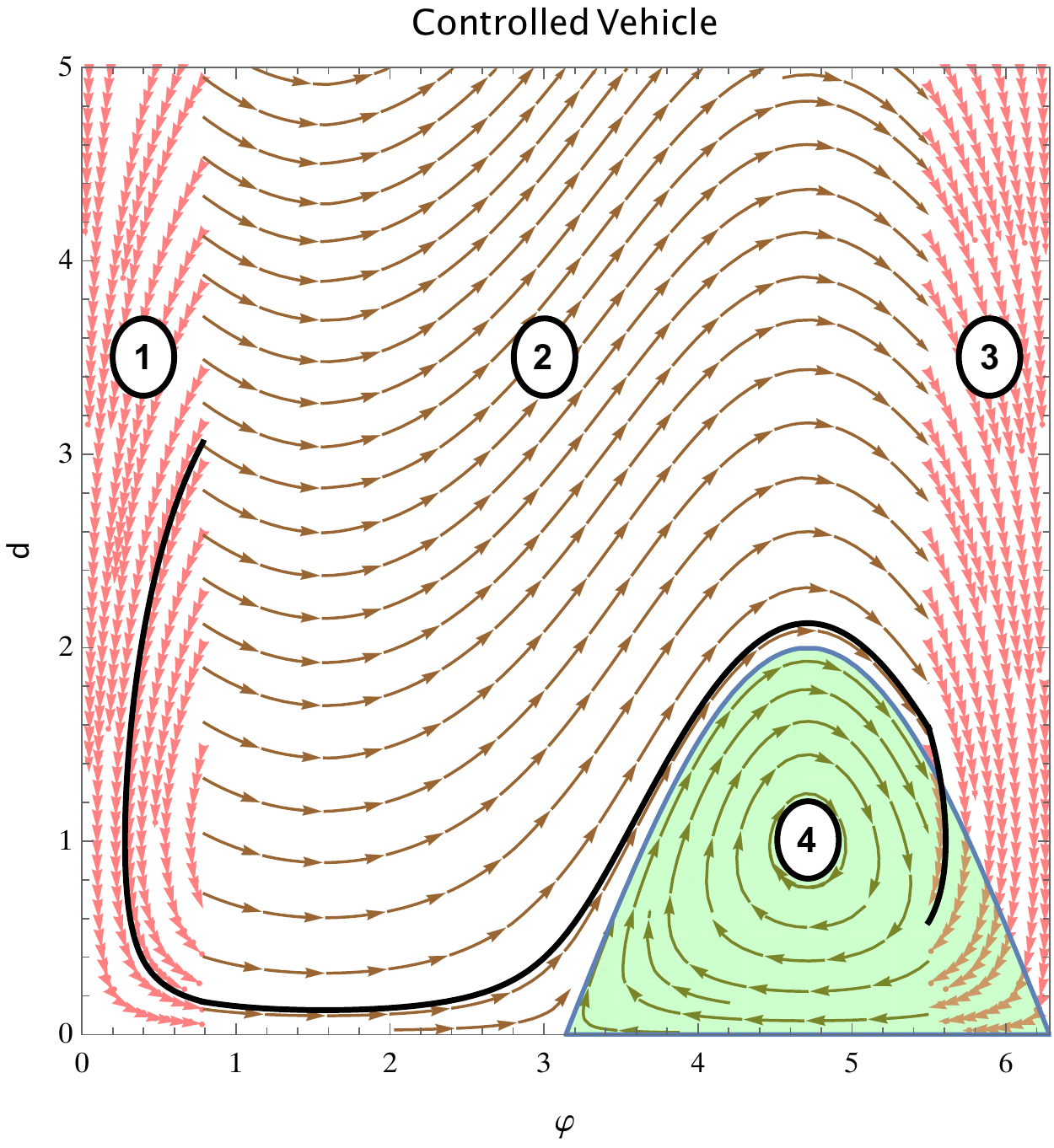}
    \caption{Phase space of the controlled Dubins vehicle in polar coordinates. 
        The black curve shows a possible trajectory of the vehicle until reaching 
    $\Vcst \leq 0$ (region \ding{195}). When the vehicle is in regions \ding{192} or \ding{194}, $u=1$; In region \ding{193} $u=-h$.} 
    \label{fig:phasespace}
\end{figure}
This partition has been found manually by separating the regions with different control and the region found safe in the previous section. We prove sequentially that $\text{\ding{193}}$ can be reached from $\text{\ding{192}}$, $\text{\ding{194}}$ can be reached from $\text{\ding{193}}$ and that $\text{\ding{195}}$ can be reached from $\text{\ding{194}}$.

\begin{lemma}
    \label{lem:reach1}
    \[
        \left(\frac{\sqrt{2}}{2} < g \;\land\; h > 0 \;\land\; \Delta \right) \rightarrow \langle\;\operatorname{Plant}_{\mid u=-h}\;\rangle\
        \left(g=\frac{\sqrt{2}}{2} \;\land\; h=\frac{\sqrt{2}}{2} \right)
    \]
\end{lemma}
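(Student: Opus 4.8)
The plan is to discharge this liveness claim with a single application of the staging-set rule $(\operatorname{SP})$ from \eqref{eq:EvenR}. I instantiate the initial set as $X_0 := (\frac{\sqrt{2}}{2} < g \land h > 0 \land \Delta)$ and the target as $X_T := (g = \frac{\sqrt{2}}{2} \land h = \frac{\sqrt{2}}{2})$. Since the diamond modality in the statement carries no evolution-domain constraint, I take $H := \mathit{true}$, so the fourth premise $X_0 \lor S \vdash H$ is immediate and the third premise reduces to the invariance of the staging set along the $\lnot X_T$-restricted flow.

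Before choosing the staging set I would record three structural facts about $\operatorname{Plant}_{\mid u=-h}$, obtained by substituting $u=-h$ into \eqref{eq:syspoly}: first, $\dot d = -g$; second, $g^2+h^2$ is conserved (its Lie derivative vanishes identically), so the constraint $g^2+h^2=1$ in $\Delta$ is maintained; and third, $h$ is Darboux (\rref{tab:dbx}), hence $h=0$ is invariant and the sign $h>0$ is preserved. These suggest the staging set $S := (\frac{\sqrt{2}}{2} \le g \land h > 0 \land \Delta)$ together with the progress function $p := d$ and $\epsilon := \frac{\sqrt{2}}{2}$.

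The first, second and fourth premises are then routine. For the progress premise, on $S$ we have $g \ge \frac{\sqrt{2}}{2}$, hence $\dot p = \dot d = -g \le -\frac{\sqrt{2}}{2} = -\epsilon$, while $p = d > 0 \ge 0$; for the inclusion premise $X_0 \land \lnot X_T \vdash S$ it suffices that $\frac{\sqrt{2}}{2} < g$ entails $\frac{\sqrt{2}}{2} \le g$, so $X_0 \subseteq S$; and $H = \mathit{true}$ settles the last premise. It is worth stressing that $d$ is a sound progress function even though the vehicle need not move monotonically toward the target: when $d>1$ one has $\dot{\cphi} = h(e-1) < 0$ and the trajectory temporarily recedes from $\cphi = \frac{\pi}{4}$, but $(\operatorname{SP})$ only requires a bounded-below quantity that strictly decreases inside $S$, which $d$ provides unconditionally.

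The main obstacle is the invariance premise $\vdash S \rightarrow [\dot{\vec{x}}=f(\vec{x}) \,\&\, \lnot X_T]\,S$, which I would establish by inspecting the boundaries of $S$. The face $h=0$ is never crossed, since $h=0$ is invariant and $h>0$ holds initially. The decisive point is the face $g = \frac{\sqrt{2}}{2}$: combined with $g^2+h^2=1$ and $h>0$ it forces $h=\frac{\sqrt{2}}{2}$, so this face of $S$ coincides exactly with $X_T$, and the flow restricted to $\lnot X_T$ therefore cannot exit $S$ across it without entering the target. The only residual concern is the face $d=0$ (equivalently $e\to\infty$), which I rule out using the standing assumption of \rref{sec:generic} that trajectories are defined for all $t \ge 0$: were the flow to remain in $S$ indefinitely, $\dot d \le -\frac{\sqrt{2}}{2}$ would drive $d$ below $0$ in finite time, hitting the coordinate singularity and contradicting global existence. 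Consequently the flow must leave $S$ in finite time, and by the boundary analysis the only admissible exit is through $X_T$; applying $(\operatorname{SP})$ then yields the claimed $\langle \operatorname{Plant}_{\mid u=-h}\rangle(g=\frac{\sqrt{2}}{2} \land h=\frac{\sqrt{2}}{2})$.
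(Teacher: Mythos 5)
Your proof is correct and takes essentially the same route as the paper: the paper's proof applies $(\operatorname{SP})$ with exactly your instantiation, namely progress function $p := d$, $\epsilon := \frac{\sqrt{2}}{2}$, and staging set $S := X_0$ (yours merely closes the face $g = \frac{\sqrt{2}}{2}$, which is harmless since, by $g^2+h^2=1$ and $h>0$, that face lies inside $X_T$ where the domain-restricted box is vacuous). The only divergence is how the invariance premise is discharged: you argue it semantically (Darboux property of $h$, conservation of $g^2+h^2$, and the standing global-existence assumption to exclude the $d=0$ face), whereas the paper observes that this premise is decidable for semialgebraic sets and discharges it automatically by quantifier elimination following \cite{DBLP:conf/emsoft/LiuZZ11}.
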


\begin{proof}
    Apply the proof rule $(\operatorname{SP})$ with the progress function $p := d$, $\epsilon := \frac{\sqrt{2}}{2}$ and the invariant set 
    $S := X_0$. All premises can be checked automatically. The first, second and fourth premises can be discharged 
    using a quantifier elimination procedure over the reals (CAD for instance). The most involved premise is the third one where 
    one has to prove the invariance of the staging set $S$. As shown in \cite{DBLP:conf/emsoft/LiuZZ11} this can be also 
    reduced to a universal quantifier elimination problem and can thus be discharged using CAD. 
   % \qed
\end{proof}

\rref{prop:reach1} exhibits a particular run of the hybrid program $\alpha$ in which it reaches the bearing 
angle $\cphi=\tfrac{\pi}{4}$ from 
any coherent initial position satisfying $\cphi \in (0,\tfrac{\pi}{4})$. 
\begin{proposition}
    \label{prop:reach1}
    \[
        \left(0 < \cphi < \frac{\pi}{4} \;\land\; \Delta \right) \rightarrow \langle \alpha \rangle\ \left(\cphi=\frac{\pi}{4}\right) 
    \]
\end{proposition}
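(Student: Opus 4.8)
The plan is to reduce the statement about the full hybrid program $\alpha$ to the single continuous evolution already handled in \rref{lem:reach1}. First I would discharge the arithmetic link between the two preconditions: the hypothesis $0<\cphi<\frac{\pi}{4}\land\Delta$ forces, through the coherence constraints packaged in $\Delta$ (which tie $g$ to $\cos\cphi$, $h$ to $\sin\cphi$, with $g^2+h^2=1$) together with the monotonicity of cosine on $[0,\frac{\pi}{2}]$, the inequalities $g=\cos\cphi>\cos\frac{\pi}{4}=\frac{\sqrt{2}}{2}$ and $h=\sin\cphi>0$. Hence $\frac{\sqrt{2}}{2}<g\land h>0\land\Delta$ holds, which is exactly the antecedent of \rref{lem:reach1}. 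This step is a routine real-arithmetic entailment.

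Next I would peel off the loop and the nondeterministic choice. Writing the body of $\alpha$ as $\beta_1\cup\beta_2$, where $\beta_2:=\{\operatorname{Plant}_{\mid u=-h}\;\&\;d>0\land 2g>\sqrt{2}\}$ is the proportional-control branch, the diamond loop-unfolding law $\langle\gamma^{*}\rangle P\leftarrow\langle\gamma\rangle\langle\gamma^{*}\rangle P$ together with the reflexive case $\langle\gamma^{*}\rangle P\leftarrow P$ and the choice equivalence $\langle a\cup b\rangle P\leftrightarrow\langle a\rangle P\lor\langle b\rangle P$ reduce $\langle\alpha\rangle(\cphi=\frac{\pi}{4})$ to $\langle\beta_2\rangle(\cphi=\frac{\pi}{4})$: it suffices to run one iteration, resolve the choice in favour of the proportional branch, and then terminate the loop with zero further iterations since the post-condition already holds at the target. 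Monotonicity of the diamond then lets me conclude from a single witnessing run of $\beta_2$.

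It then remains to upgrade the bare-dynamics conclusion of \rref{lem:reach1}, namely $\langle\operatorname{Plant}_{\mid u=-h}\rangle(g=\frac{\sqrt{2}}{2}\land h=\frac{\sqrt{2}}{2})$, to the domain-constrained $\langle\beta_2\rangle(\cphi=\frac{\pi}{4})$, using that $g=h=\frac{\sqrt{2}}{2}$ is exactly $\cphi=\frac{\pi}{4}$. By the staging-set invariance established inside \rref{lem:reach1}, the witnessing trajectory stays in $g>\frac{\sqrt{2}}{2}$ (equivalently $2g>\sqrt{2}$), $h>0$ and $d>0$ at every instant strictly before it meets the target, so the evolution-domain constraint $d>0\land 2g>\sqrt{2}$ of $\beta_2$ is satisfied along the whole trajectory up to the endpoint.

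The hard part, and the step I expect to be the main obstacle, is precisely this endpoint. At $\cphi=\frac{\pi}{4}$ one has $2g=\sqrt{2}$, so the target lies on the \emph{boundary} of the open domain $2g>\sqrt{2}$, whereas the closed-interval semantics of the diamond modality requires the evolution-domain constraint to hold at the final time as well. Concretely, one cannot simply instantiate $(SP)$ with $H:=d>0\land 2g>\sqrt{2}$ and $X_T:=(\cphi=\frac{\pi}{4})$, since then $H\land X_T$ is unsatisfiable and the invariance premise would demand that the staging set be invariant for all time, which is false; this is exactly why \rref{lem:reach1} is stated for the \emph{bare} dynamics. The resolution I would pursue is to exploit that the switching surface $\cphi=\frac{\pi}{4}$ ($2g=\sqrt{2}$) already belongs to the \emph{closed} domain $2g\leq\sqrt{2}$ of the constant-control branch $\beta_1$, and to recover the endpoint through a differential-refinement/closure argument: reach every state with $2g>\sqrt{2}$ arbitrarily close to the surface by valid runs of $\beta_2$ and pass to the limit by continuity of the flow. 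Everything else---the arithmetic, the loop and choice unfolding, and the maintenance of the open domain on the half-open interval---is routine and handled, as in \rref{lem:reach1}, by quantifier elimination.
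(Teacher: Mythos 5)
Your first three steps are exactly the paper's proof: the paper likewise derives $g>\frac{\sqrt{2}}{2}$ and $h>0$ from $0<\cphi<\frac{\pi}{4}$ and $\Delta$, invokes \rref{lem:reach1}, and argues that the witnessing run of the bare proportional dynamics is a legitimate run of $\alpha$ because the staging set keeps $g>\frac{\sqrt{2}}{2}$ and $d>0$ satisfied along the way; the loop/choice unfolding you spell out is left implicit there. Moreover, the boundary subtlety you isolate in your last paragraph is real, and the paper's proof simply glosses over it: in \ddl{} the evolution-domain constraint must hold at the final time of the run, and the target $\cphi=\frac{\pi}{4}$ has $2g=\sqrt{2}$, which violates the strict constraint $2g>\sqrt{2}$ of the proportional branch of \rref{defn:hybridpgr}. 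Your remark that $(\operatorname{SP})$ cannot be instantiated with this $H$ and this $X_T$ (the premises become jointly unsatisfiable) is also correct.

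The gap is your proposed repair, which cannot work. The diamond modality asserts the existence of an actual run of $\alpha$ ending in a state where $\cphi=\frac{\pi}{4}$ holds exactly; reachable sets of hybrid programs are not closed, so ``reach points arbitrarily close to the surface and pass to the limit by continuity'' establishes nothing. Worse, here the limit point is provably unreachable: from any state with $2g>\sqrt{2}$ the constant-control branch cannot execute at all (even a zero-duration run requires its domain $2g\leq\sqrt{2}$ to hold at the initial state), and every run of the proportional branch must end in a state satisfying $2g>\sqrt{2}$; by induction on the loop, every state that $\alpha$ can reach from the hypothesis region satisfies $2g>\sqrt{2}$, so $\cphi=\frac{\pi}{4}$ is never attained. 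In other words, with \rref{defn:hybridpgr} read literally, the proposition is not provable because it is false; the defect lies in the model, not in the proof strategy, and no closure or refinement argument can circumvent it. The remedy is to close (or overlap) the evolution domains --- e.g.\ give the proportional branch the domain $d>0\land 2g\geq\sqrt{2}$ --- after which your steps, which are the paper's steps, go through: the staging-set invariance from \rref{lem:reach1} guarantees the constraint strictly before the endpoint, and the closed domain absorbs the endpoint itself.
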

\begin{proof}
    When $\cphi \in (0,\tfrac{\pi}{4})$ initially then $g > \tfrac{\sqrt{2}}{2}$ and the control input $u$ is set to $-h$. 
    According to \rref{lem:reach1}, and since $d>0$ is a positive invariant, $\cphi=\tfrac{\pi}{4}$ is reachable by 
    continuously following the dynamics while fixing $u$ to $-h$. 
    This is allowed by the control law since, as long as the system evolves within the staging set, $g > \tfrac{\sqrt{2}}{2}$ is 
    satisfied.  
%    \qed
\end{proof}

Similarly, we prove that region \ding{194} is reachable from \ding{193} by exhibiting a run of $\alpha$ that reaches the region \ding{193}.  
\begin{lemma}
    \label{lem:reach2}
    \begin{equation*}
        \left(g \leq \frac{\sqrt{2}}{2} \;\land\; \Vcst>0 \;\land\; \Delta \right) 
        \rightarrow \langle\;\operatorname{Plant}_{\mid u=1} \;\rangle \left(\frac{\sqrt{2}}{2} < g \;\land\; h < 0 \right) %\;\land\; g^2+h^2=1 \right)
    \end{equation*}
\end{lemma}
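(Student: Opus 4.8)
The plan is to apply the liveness rule $(\operatorname{SP})$ from~\eqref{eq:EvenR}, exactly as in \rref{lem:reach1}, taking $X_0$ and $X_T$ to be the hypothesis and the conclusion of the lemma, the dynamics to be $\operatorname{Plant}_{\mid u=1}$, and the evolution domain $H$ to be simply $d>0$ (the physical validity constraint of the polar coordinates; the switching guard $g\le\tfrac{\sqrt{2}}{2}$ is instead carried by the staging set and reconciled with the control law \eqref{eq:claw} only when the lemma is plugged into $\alpha$, as was done in \rref{prop:reach1}). The geometric picture is that under the constant control $u=1$ the bearing $\cphi$ increases monotonically across region \ding{193} until it crosses $\tfrac{7\pi}{4}$, at which point $g=\cos\cphi$ rises above $\tfrac{\sqrt{2}}{2}$ while $h=\sin\cphi$ is still negative, i.e.\ precisely $X_T$. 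Concretely I would instantiate the rule with the staging set $S:=\big(\tfrac{\pi}{4}\le\cphi\le\tfrac{7\pi}{4}\ \land\ \Vcst>0\ \land\ \Delta\big)$ (which, under the coherence $g=\cos\cphi,\ h=\sin\cphi$ encoded by $\Delta$, coincides with the hypothesis $g\le\tfrac{\sqrt{2}}{2}\land\Vcst>0\land\Delta$ and with region \ding{193}), the progress function $p:=\tfrac{7\pi}{4}-\cphi$, and $\epsilon:=\tfrac12$; note that $p$ is a genuine affine function of the state, since $\cphi$ is a variable of \eqref{eq:syspoly} with $\dot\cphi=he+1$.

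First I would discharge the three routine premises. For the progress premise, $p\ge0$ is immediate from $\cphi\le\tfrac{7\pi}{4}$ in $S$, while $\dot p=-(he+1)$; using $e=1/d$ from $\Delta$ together with the conserved quantity $\Vcst=d^2+2dh>0$ and $d>0$ gives $d+2h>0$, hence $he=h/d>-\tfrac12$, so $\dot\cphi>\tfrac12$ and $\dot p<-\tfrac12=-\epsilon$. The premise $X_0\land\lnot X_T\vdash S$ is trivial because $S$ coincides with $X_0$, which is disjoint from $X_T$ (the former forces $g\le\tfrac{\sqrt{2}}{2}$, the latter $g>\tfrac{\sqrt{2}}{2}$), and $X_0\lor S\vdash H$ holds since both sets contain $d>0$. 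All three reduce to universally quantified polynomial statements and can be handled by CAD as in \rref{lem:reach1}.

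The hard part will be the remaining invariance premise $S\rightarrow[\dot{\vec x}=f\ \&\ \lnot(H\land X_T)]\,S$. Here I would verify that $\Vcst>0$ is preserved because $\dVcst=0$ in the constant mode, that $g^2+h^2=1$ and $ed=1$ are preserved since they are Darboux (cf.\ \rref{tab:dbx} and the construction of $\Delta$), that $\cphi\ge\tfrac{\pi}{4}$ is preserved because $\dot\cphi>0$, and that $\cphi\le\tfrac{7\pi}{4}$ is preserved precisely thanks to the domain restriction $\lnot(H\land X_T)$: the first instant $\cphi$ would exceed $\tfrac{7\pi}{4}$ is exactly the instant $g$ exceeds $\tfrac{\sqrt{2}}{2}$ with $h<0$ and $d>0$, i.e.\ the instant $H\land X_T$ holds, at which the evolution is halted. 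Within this premise two points deserve care. First, one must rule out the finite-time blow-up and polar singularity of \rref{sec:special}: it does not occur here because $\Vcst=d(d+2h)$ is a positive constant and $|h|\le1$, which forces $d\ge-1+\sqrt{1+\Vcst}>0$, keeping $d$ and $e=1/d$ bounded throughout. Second, the boundary instant $\cphi=\tfrac{7\pi}{4}$ must have its strict and non-strict inequalities aligned so that leaving $S$ coincides with entering $X_T$; this is exactly what the evolution domain $\lnot(H\land X_T)$ of the $(\operatorname{SP})$ rule buys us. A more conceptual subtlety underlying the whole argument is the representational gap between the purely algebraic variables $(g,h,e,d)$, in which $\cphi$ is unconstrained, and the transcendental coherence $g=\cos\cphi$: the progress function must be the genuinely monotone angle $\cphi$, because no algebraic function of $(g,h)$ is monotone over the $\tfrac{3\pi}{2}$-long arc traversed, so the proof relies on $\cphi$ being a tracked state variable whose coherence with $(g,h)$ is preserved by the flow. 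With these checks in place, $(\operatorname{SP})$ yields the desired diamond property.
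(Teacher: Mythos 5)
Your proposal is correct and takes essentially the same route as the paper: its proof applies $(\operatorname{SP})$ with exactly your instantiation --- progress function $p := -\cphi+\tfrac{7\pi}{4}$, $\epsilon := \tfrac{1}{2}$, and staging set $S := X_0$ (your region \ding{193} formulation coincides with $X_0$ under the coherence encoded by $\Delta$). Your extra work --- discharging the progress premise via $\Vcst>0$, ruling out blow-up, and flagging that $S \rightarrow p\geq 0$ requires the transcendental link $(g,h)=(\cos\cphi,\sin\cphi)$ --- merely fills in details the paper compresses into its remark that this step ``has to be done manually as it requires manipulating transcendental functions.''
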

\begin{proof}
    Apply the proof rule $(\operatorname{SP})$ with the progress function $p := -\cphi+\tfrac{7\pi}{4}$, $\epsilon := \frac{1}{2}$ and the invariant set 
    $S := X_0$. To prove that $S \rightarrow p\geq 0$, we need to use the fact that $(g,h)=(\cos(\cphi),\sin(\cphi))$. 
    The proof is not involved but has to be done manually as it requires manipulating transcendental functions. 
 %   \qed
\end{proof}

\begin{proposition}
    \label{prop:reach2}
    \[
        \left(\frac{\pi}{4} \leq \cphi \leq \frac{7\pi}{4} \;\land\; \Vcst > 0 \;\land\; \Delta \right) \rightarrow \langle \alpha \rangle\ \left(\frac{7\pi}{4} < \cphi < 2\pi\right) 
    \]
\end{proposition}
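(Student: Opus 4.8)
The plan is to mirror the proof of \rref{prop:reach1}, substituting \rref{lem:reach2} for \rref{lem:reach1} so that region \ding{194} plays the role formerly played by region \ding{193}. The key observation is that on region \ding{193}, where $\tfrac{\pi}{4} \leq \cphi \leq \tfrac{7\pi}{4}$, the cosine satisfies $g = \cos(\cphi) \leq \tfrac{\sqrt{2}}{2}$; in fact $g \leq \tfrac{\sqrt{2}}{2} \Leftrightarrow \cphi \in [\tfrac{\pi}{4},\tfrac{7\pi}{4}]$ on $[0,2\pi)$. Hence the switching test $2g \leq \sqrt{2}$ of \eqref{eq:claw} holds throughout region \ding{193}, the control is fixed to $u = 1$, and the branch $\operatorname{Plant}_{\mid u=1}$ is the one the non-deterministic choice in $\alpha$ should select.

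First I would rewrite the antecedent $\tfrac{\pi}{4} \leq \cphi \leq \tfrac{7\pi}{4} \land \Vcst > 0 \land \Delta$ in the algebraic variables, recovering exactly the hypothesis $g \leq \tfrac{\sqrt{2}}{2} \land \Vcst > 0 \land \Delta$ of \rref{lem:reach2}. Applying that lemma yields a continuous run of $\operatorname{Plant}_{\mid u=1}$ reaching $\tfrac{\sqrt{2}}{2} < g \land h < 0$. I would then translate this post-condition back to the angle: $g = \cos(\cphi) > \tfrac{\sqrt{2}}{2}$ together with $h = \sin(\cphi) < 0$ pins $\cphi$ to the arc $(\tfrac{7\pi}{4}, 2\pi)$, which is precisely the $\cphi$-description of region \ding{194}. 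It then remains to lift this continuous reachability to a run of the loop $\alpha$.

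The delicate point is the interface with the evolution domains of $\alpha$. Progress is driven by $p = -\cphi + \tfrac{7\pi}{4}$: because $\Vcst = d^2 + 2dh > 0$ and $d > 0$ force $\tfrac{h}{d} > -\tfrac{1}{2}$, one has $\dot\cphi = \tfrac{h}{d} + 1 > \tfrac{1}{2}$, so $\cphi$ increases monotonically toward $\tfrac{7\pi}{4}$. But the target arc $(\tfrac{7\pi}{4}, 2\pi)$ is open and lies just outside the domain $2g \leq \sqrt{2}$ of the $u = 1$ branch, so that branch alone reaches only the switching boundary $\cphi = \tfrac{7\pi}{4}$, and the interior of region \ding{194} requires the switch to $u = -h$. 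The enabling fact I would exploit is that $\dVcst = 0$ while $u = 1$, so $\Vcst$ stays positive along the whole phase; at the crossing $\cphi = \tfrac{7\pi}{4}$ one has $\Vcst = d(d - \sqrt{2}) > 0$, forcing $d > \sqrt{2} > 1$. With $d > 1$ the subsequent $u = -h$ dynamics give $\dot\cphi = h\left(\tfrac{1}{d} - 1\right)$, which has the sign of $-h > 0$ and therefore keeps driving $\cphi$ upward into region \ding{194} rather than back into region \ding{193}.

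I expect the main obstacle to be the same one flagged for \rref{lem:reach2}: the two equivalences between the algebraic variables $(g,h)$ and the angle $\cphi$ --- $\cphi \in [\tfrac{\pi}{4},\tfrac{7\pi}{4}] \Leftrightarrow g \leq \tfrac{\sqrt{2}}{2}$ on entry and $(\tfrac{\sqrt{2}}{2} < g \land h < 0) \Leftrightarrow \cphi \in (\tfrac{7\pi}{4}, 2\pi)$ on exit --- are not decidable by quantifier elimination over the reals and must be argued by hand from $g = \cos(\cphi)$, $h = \sin(\cphi)$ and the invariant $g^2 + h^2 = 1$ carried by $\Delta$. Everything else reduces to the already-discharged premises of \rref{lem:reach2} and to the routine unfolding of the loop $\alpha$.
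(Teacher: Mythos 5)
Your core argument is exactly the paper's (implicit) proof: the paper states \rref{prop:reach2} without a proof of its own, intending the obvious analogue of the proof of \rref{prop:reach1} --- translate the hypothesis $\tfrac{\pi}{4} \leq \cphi \leq \tfrac{7\pi}{4}$ into $g \leq \tfrac{\sqrt{2}}{2}$ so that the loop selects the $\operatorname{Plant}_{\mid u=1}$ branch, invoke \rref{lem:reach2} for the continuous reachability, and read the exit condition $\tfrac{\sqrt{2}}{2} < g \land h < 0$ back as $\cphi \in (\tfrac{7\pi}{4}, 2\pi)$. Up to that point your proposal and the paper coincide, including the caveat that the correspondence between $(g,h)$ and $\cphi$ is transcendental and must be argued by hand.

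Where you go beyond the paper is your ``delicate point,'' and you are right that it is delicate --- but your fix does not close it. Your vector-field analysis is sound: $\dVcst = 0$ under $u=1$ keeps $\Vcst > 0$, so $d > \sqrt{2}$ at the crossing $\cphi = \tfrac{7\pi}{4}$, whence $\dot\cphi = h\left(\tfrac{1}{d}-1\right) > 0$ under $u=-h$. The paper never makes this argument (its proof of \rref{prop:reach1} has the same boundary problem, since its target $g = \tfrac{\sqrt{2}}{2}$ lies outside the strict domain $2g > \sqrt{2}$ of the branch it uses, and the paper silently ignores it). However, the switch you propose is not a legal transition of $\alpha$ as written in \rref{defn:hybridpgr}: in dL semantics the evolution-domain constraint must hold at every instant of an evolution, including the initial one, and the $u=-h$ branch carries the strict constraint $2g > \sqrt{2}$, which fails at the crossing point where $2g = \sqrt{2}$. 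In fact $2g \leq \sqrt{2}$ is a loop invariant of $\alpha$ when started in region \ding{193}: the $u=1$ branch preserves it by its own domain, and the $u=-h$ branch cannot even begin, so no run of the literal program ever satisfies $2g > \sqrt{2}$ and the open target of \rref{prop:reach2} is, strictly speaking, unreachable. The statement only holds under a more permissive switching model, e.g.\ after relaxing the second evolution domain to $2g \geq \sqrt{2}$ so that the boundary belongs to both branches. So the missing ingredient is not another continuous phase but an amendment of the hybrid program (or an explicit appeal to a closed/overlapping-domain semantics); your write-up presents the switch as available when, against Definition~\ref{defn:hybridpgr}, it is not.
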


Finally, we prove the reachability of the invariant set \ding{195} from \ding{194}. %$\Vcst \leq 0$. 
\begin{lemma}
    \label{lemma:reach3}
    \[
        \left(\frac{\sqrt{2}}{2} < g \;\land\; h < 0 \;\land\; \Vcst > 0 \;\land\; \Delta \right) \rightarrow \langle\;\operatorname{Plant}_{\mid u=-h} \;\rangle\ 
        \left(\Vcst \leq 0 \right)
    \]
\end{lemma}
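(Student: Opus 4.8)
The plan is to discharge this reachability obligation by a single application of the liveness rule $(\operatorname{SP})$, exactly as was done for \rref{lem:reach1}, since both statements concern the same proportional dynamics $\operatorname{Plant}_{\mid u=-h}$ started from a region where $g > \frac{\sqrt{2}}{2}$. I would take the staging set to be the left-hand side itself, $S := X_0 = \left(\frac{\sqrt{2}}{2} < g \land h < 0 \land \Vcst > 0 \land \Delta\right)$, the target $X_T := (\Vcst \leq 0)$, and the evolution domain $H := (d > 0 \land 2g > \sqrt{2})$. The crucial modelling choice is the progress function: rather than $\Vcst$ itself, whose Lie derivative $\dVcst = -2gd(h+1)$ degenerates to $0$ as $d \to 0$ and therefore admits no uniform bound, I would use $p := d$ with $\epsilon := \frac{\sqrt{2}}{2}$, because $\dot{d} = -g$ is bounded away from zero on $S$. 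The standing assumption of the generic case, that trajectories are defined for all $t \geq 0$, supplies the definedness hypothesis required by the rule.

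With these choices the first, second and fourth premises are immediate and reduce to real quantifier elimination. For the progress premise, $p = d \geq 0$ holds on $S$ (from $d>0$ in $\Delta$), and $\dot{p} = -g \leq -\frac{\sqrt{2}}{2} = -\epsilon$ holds because $g > \frac{\sqrt{2}}{2}$ on $S$. The second premise $X_0 \land \lnot X_T \vdash S$ is trivial since $S = X_0$, and the fourth premise $X_0 \lor S \vdash H$ follows because $X_0$ already asserts $2g > \sqrt{2}$ while $\Delta$ asserts $d>0$.

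The main obstacle is the third premise, the invariance of the staging set $S$ along the flow constrained to $\lnot(H \land X_T)$; as in \rref{lem:reach1} this reduces (following \cite{DBLP:conf/emsoft/LiuZZ11}) to a universal quantifier-elimination problem dischargeable by CAD, but its validity rests on a geometric coupling that I would make explicit so as to trust the automated check. The constraints $g^2+h^2=1$ and $ed=1$ are invariant by construction, and $h<0$ is preserved because $h$ is a Darboux polynomial (\rref{tab:dbx}), so $h$ keeps its sign along the trajectories. The delicate face is $g = \frac{\sqrt{2}}{2}$: there $h=-\frac{\sqrt{2}}{2}$ and $\dot{g} = -h^2(e-1) = -\tfrac{1}{2}(e-1)$, whose sign is a priori ambiguous. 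However, a point of $S$ on this face satisfies $\Vcst = d(d-\sqrt{2}) > 0$, forcing $d > \sqrt{2} > 1$, hence $e = \tfrac{1}{d} < 1$ and $\dot{g} > 0$: the flow points strictly back into $g > \frac{\sqrt{2}}{2}$. Thus the only face through which the flow can leave $S$ is $\Vcst = 0$, which is precisely the entry into $X_T$.

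Finally, I would observe that $S$ cannot leak through $d=0$ either: on the closure, $d=0$ forces $\Vcst = d^2 + 2dh = 0$, contradicting the strict inequality $\Vcst>0$ that defines $S$, so the face $d=0$ of $S$ collapses onto the target face $\Vcst=0$. Combined with $\dVcst = -2gd(h+1) < 0$ on $S$ (strict, since $g>\frac{\sqrt{2}}{2}$, $d>0$ and $h+1 > 1-\frac{\sqrt{2}}{2}>0$), this shows that $\Vcst$ decreases monotonically toward the target while $p=d$ forces the trajectory out of $S$ in finite time; by $(\operatorname{SP})$ the only admissible exit is into $\Vcst \leq 0$. The residual work is then purely the CAD discharge of the sign conditions above, which I expect to be routine once the $g = \frac{\sqrt{2}}{2}$ coupling has been isolated.
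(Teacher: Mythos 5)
Your instantiation of $(\operatorname{SP})$ --- progress function $p := d$, $\epsilon := \frac{\sqrt{2}}{2}$, staging set $S := X_0$ --- is exactly the paper's proof, which consists of precisely this one-line instantiation and delegates all four premises to quantifier elimination (the invariance premise via the reduction of \cite{DBLP:conf/emsoft/LiuZZ11}, as in \rref{lem:reach1}). Your verification of the progress premise, and your face analysis for $g = \frac{\sqrt{2}}{2}$ (using $\Vcst \geq 0$ to force $d \geq \sqrt{2}$, hence $e<1$ and $\dot g = h^2(1-e) > 0$) and for $h=0$ ($h$ Darboux), are correct and go beyond what the paper spells out.

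However, your treatment of the $d=0$ face does not establish what it needs to. Premise 3 of $(\operatorname{SP})$ requires $S$ to be invariant for the flow under the evolution domain $\lnot(H \land X_T)$ with $H = (d>0 \land 2g>\sqrt{2})$; an exit from $S$ is only admissible at points where $H \land X_T$ holds. At a point with $d=0$, $H$ \emph{fails}, so such a point still lies inside the domain $\lnot(H \land X_T)$, and since $S$ contains $d>0$ (via $\Delta$), a trajectory reaching $d=0$ would falsify premise 3 outright. Observing that the $d=0$ face sits inside the surface $\Vcst=0$ therefore does not make it a benign ``target face''; you must actually show that trajectories starting in $S$ cannot reach $d=0$, and your closing step ``by $(\operatorname{SP})$ the only admissible exit is into $\Vcst \leq 0$'' is circular, since that is exactly what premise 3 is supposed to certify. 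The missing argument can be supplied: once $d<1$ one has $e>1$, so $\dot h = h(e-1)g < 0$ and $h$ stays below its value $h_1 < 0$ from then on; hence $\Vcst = d(d+2h) \leq d(d+2h_1)$ becomes nonpositive no later than when $d$ (which decreases at rate at least $\frac{\sqrt{2}}{2}$) reaches $-2h_1 > 0$, i.e.\ the trajectory crosses $\Vcst = 0$ at a strictly positive $d$, before $d$ can vanish. Alternatively one can rely, as the paper implicitly does, on the soundness and completeness of the semialgebraic invariance check of \cite{DBLP:conf/emsoft/LiuZZ11}, which certifies premise 3 because it is in fact true; but then this face deserves no hand-made justification at all, rather than an incorrect one.
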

\begin{proof}
    Apply the proof rule $(\operatorname{SP})$ with the progress function $p := d$, $\epsilon := \frac{\sqrt{2}}{2}$ and the invariant set 
    $S := X_0$. 
\end{proof}

\begin{proposition}
    \label{prop:reach3}
    \[
        \left(\frac{7\pi}{4} < \cphi < 2\pi \;\land\; \Vcst > 0 \;\land\; \Delta \right) \rightarrow \langle \alpha \rangle\ \left(\Vcst \leq 0\right) 
    \]
\end{proposition}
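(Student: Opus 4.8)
The plan is to prove this proposition exactly as \rref{prop:reach1} was obtained from \rref{lem:reach1}: reduce it to the single-mode reachability statement of Lemma~\ref{lemma:reach3}, then lift the resulting continuous run to a run of the hybrid program $\alpha$. First I would translate the angular hypothesis $\tfrac{7\pi}{4} < \cphi < 2\pi$ into the algebraic sign conditions on $g$ and $h$ demanded by Lemma~\ref{lemma:reach3}. Using the coherence constraint $\Delta$, which pins $g=\cos(\cphi)$ and $h=\sin(\cphi)$, on the arc $(\tfrac{7\pi}{4},2\pi)$ the cosine is strictly increasing from $\tfrac{\sqrt{2}}{2}$ to $1$ while the sine stays negative, so $g > \tfrac{\sqrt{2}}{2}$ and $h<0$. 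Together with $\Vcst>0$ and $\Delta$ this is precisely the antecedent of Lemma~\ref{lemma:reach3}.

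Next, since $g>\tfrac{\sqrt{2}}{2}$ (equivalently $2g>\sqrt{2}$), the control law \eqref{eq:claw} selects the proportional mode $u=-h$, so the relevant continuous dynamics are $\operatorname{Plant}_{\mid u=-h}$. Lemma~\ref{lemma:reach3} then delivers $\langle \operatorname{Plant}_{\mid u=-h} \rangle\,\Vcst\le 0$: the progress function $p:=d$ is bounded below by $0$ and decreases at rate $\dot{d}=-g<-\tfrac{\sqrt{2}}{2}=-\epsilon$ throughout the staging set $S:=X_0$, which forces the trajectory to leave $S$; and by construction it can only leave through the target $\Vcst\le 0$. Concretely, on $S$ we have $h<0$ and $\Vcst=d(d+2h)$, so $\Vcst\le 0$ exactly when $d\le -2h$, a value $d$ reaches as it shrinks, while $\dVcst=-2gd(h+1)\le 0$ guarantees $\Vcst$ never climbs back.

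Finally I would lift this single continuous evolution to a diamond run of $\alpha$ by taking one iteration of the loop and resolving the non-deterministic choice to the branch $\{\operatorname{Plant}_{\mid u=-h}\,\&\,d>0\land 2g>\sqrt{2}\}$. Invariance of the staging set (the third premise of the $(\operatorname{SP})$ rule) guarantees that $g>\tfrac{\sqrt{2}}{2}$ and $d>0$ are maintained along the entire evolution up to the instant $\Vcst\le 0$ is first attained, so the evolution-domain constraint of that branch is respected throughout; hence the run is a legal single pass through $\alpha$ and $\langle\alpha\rangle\,\Vcst\le 0$ follows. The main obstacle is the very first step: deriving the $g,h$ sign conditions from the angular bounds requires reasoning about the transcendental functions $\cos$ and $\sin$ and, as already noted for \rref{lem:reach2}, must be carried out by hand rather than discharged by a decision procedure over the reals. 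A reassuring simplification, in contrast with \rref{prop:reach2}, is that the target $\Vcst\le 0$ does not pin $g$ to the switching boundary $g=\tfrac{\sqrt{2}}{2}$, so the trajectory stays strictly inside the $u=-h$ mode and the lifting to $\alpha$ is immediate.
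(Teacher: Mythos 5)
Your proposal is correct and matches the paper's (largely implicit) argument: the paper states \rref{prop:reach3} as the direct analogue of \rref{prop:reach1}, obtained from Lemma~\ref{lemma:reach3} via the $(\operatorname{SP})$ rule with progress function $p:=d$, $\epsilon:=\tfrac{\sqrt{2}}{2}$, staging set $S:=X_0$, and the same lifting to a single pass through the proportional branch of $\alpha$ justified by the invariance of the staging set. Your translation of $\tfrac{7\pi}{4}<\cphi<2\pi$ into $g>\tfrac{\sqrt{2}}{2}\land h<0$, and your observation that the evolution-domain constraints $d>0\land 2g>\sqrt{2}$ are maintained because $S$ is invariant, are exactly the steps the paper uses in its proof of \rref{prop:reach1} and leaves to the reader here.
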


Combined together, Propositions \ref{prop:reach1}, \ref{prop:reach2} and \ref{prop:reach3} give:
%proves the reachability of the invariant set $\Vcst \leq 0$ from any generic coherent initial position.
\begin{theorem}[Reachability of $\Vcst$]
    \label{thm:reachability}
    The region $\Vcst \leq 0$ is reachable from any coherent generic position: 
    %in the phase space of the polar coordinates.  
    \[
        \left(0 < \cphi < 2\pi \;\land\; \Delta \right) \rightarrow \langle \alpha \rangle\ \left(\Vcst \leq 0\right) 
    \]
\end{theorem}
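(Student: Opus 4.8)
The plan is to prove \rref{thm:reachability} by a case analysis on the region of the phase space in which the initial state lies, and then to chain together the three reachability results already established in \rref{prop:reach1}, \rref{prop:reach2} and \rref{prop:reach3}. First I would observe that, under the hypotheses $0<\cphi<2\pi$ and $\Delta$, the four regions \ding{192}--\ding{195} of \rref{eq:regions} cover the whole generic phase space: if $\Vcst\leq0$ the state already lies in \ding{195}, and otherwise the value of $\cphi$ places it in exactly one of \ding{192} ($0<\cphi<\tfrac{\pi}{4}$), \ding{193} ($\tfrac{\pi}{4}\leq\cphi\leq\tfrac{7\pi}{4}$), or \ding{194} ($\tfrac{7\pi}{4}<\cphi<2\pi$). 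The case $\Vcst\leq0$ is immediate, since the loop $\alpha=\{\cdots\}^*$ admits the empty run and the post-condition already holds.

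For the remaining cases ($\Vcst>0$) the idea is to drive the system along the chain $\text{\ding{192}}\to\text{\ding{193}}\to\text{\ding{194}}\to\text{\ding{195}}$, entering at the stage dictated by the initial $\cphi$: from \ding{192} I would apply \rref{prop:reach1} to reach $\cphi=\tfrac{\pi}{4}$, then \rref{prop:reach2} to reach \ding{194}, then \rref{prop:reach3} to reach $\Vcst\leq0$; from \ding{193} I would start at \rref{prop:reach2}, and from \ding{194} directly at \rref{prop:reach3}. The composition of these diamond-modality statements is sound because $\alpha$ is a Kleene star: running $\alpha$ and then $\alpha$ again is itself a run of $\alpha$, so $\langle\alpha\rangle\langle\alpha\rangle\varphi$ entails $\langle\alpha\rangle\varphi$, and together with the monotonicity (generalization) of the diamond modality this lets each reached intermediate condition feed the precondition of the next proposition.

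The main obstacle is that each reachability lemma only asserts its explicit post-condition (e.g. $\cphi=\tfrac{\pi}{4}$), whereas chaining $\langle\alpha\rangle P$ with $P\rightarrow\langle\alpha\rangle Q$ requires $P$ to entail the \emph{full} precondition of the next proposition, including the side conditions $\Delta$ and $\Vcst>0$. To bridge this I would strengthen each reachability post-condition by conjoining $\Delta$ — legitimate because $g^2+h^2=1$ and $de=1$ are invariants of \rref{eq:syspoly} (the latter Darboux for every $u$) and $d>0$ is enforced by the evolution domain, so $\Delta$ holds along every run — and, where needed, $\Vcst>0$. The positivity of $\Vcst$ then has to be tracked across each hand-off: on \ding{192} the proportional control is active and $\Vcst=d(d+2h)>0$ automatically, since $d>0$ and $h=\sin\cphi>0$ there, so $\cphi=\tfrac{\pi}{4}$ is reached with $\Vcst=d(d+\sqrt2)>0$; on \ding{193} the constant control gives $\dVcst=0$, so $\Vcst$ is preserved and remains positive up to \ding{194}; and on \ding{194} the hypotheses $g>\tfrac{\sqrt2}{2}$ and $h<0$ demanded by \rref{prop:reach3} follow from $\tfrac{7\pi}{4}<\cphi<2\pi$ together with $\Delta$. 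Verifying these sign conditions — which are exactly what bridges the purely algebraic reachability lemmas and the trigonometric meaning of the regions — is the only delicate part; once they are in place, the three propositions combine to yield the conclusion.
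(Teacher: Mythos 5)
Your proposal is correct and takes essentially the same route as the paper, whose entire proof of \rref{thm:reachability} is the single sentence that \rref{prop:reach1}, \rref{prop:reach2} and \rref{prop:reach3} ``combined together'' give the result. Your write-up merely makes explicit the chaining machinery the paper leaves implicit --- the empty-run case when $\Vcst \leq 0$ already holds, the Kleene-star transitivity $\langle\alpha\rangle\langle\alpha\rangle\varphi \rightarrow \langle\alpha\rangle\varphi$, and the propagation of $\Delta$ and $\Vcst>0$ across the hand-offs (the latter could even be avoided by a case split at each hand-off, since $\Vcst\leq 0$ there would mean the target is already reached) --- which is a faithful, more careful rendering of the same argument.
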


	The proof rule $(SP)$ is not yet available in \keyx{}. %, hence only the proof of the premises can be assisted. 
    All the premises can be however discharged using a quantifier elimination procedure, including the invariance of the staging 
    set $S$. The latter is a direct consequence of the decidability of the invariance of semialgebraic sets~\cite{DBLP:conf/emsoft/LiuZZ11}. 
    We used our proper implementation of the procedure described in~\cite{DBLP:conf/emsoft/LiuZZ11} together with the \texttt{Reduce} 
    procedure in Wolfram Mathematica to discharge all the premises of the $(SP)$ proof rule, except for \rref{lem:reach2} where transcendental functions are involved. 
    %The reduction of the third premise to a quantifier elimination problem (see~\cite{DBLP:conf/emsoft/LiuZZ11}) is also not yet available in \keyx{}. 
	%It is hence difficult to use the current version of the tool for the reachability analysis although there is no theoretical nor practical problems for a future version of \keyx{} to implement these elements. 
   % The exception is \rref{prop:reach2} as it requires for the progress function to manipulate transcendental functions via the relation of $\cphi$ to its sine and cosine. 
%	
	%The proof of the Lemmas discussed above has hence been performed by hand using Mathematica~\cite{Mathematica} for the transformation of the invariance of the staging set into a quantifier elimination problem, and its solving. 

\begin{remark}
\rref{thm:reachability} tells nothing about the time required to reach the target set, only that this time is finite. 
    The rule $(SP)$ embeds, however, a bounded progress function which can be used to determine an upper bound on the time 
    required to reach the target region. 
	For example starting at $(\cphi_0, d_0)$ in region \ding{192} ($d >0$), and given the upper bound on the decrease of the progress 
    function $d$, we can conclude that the region \ding{193} is entered in at most at $\sqrt{2} d_0$ seconds.
    The quality of this upper bound depends on the positive bound used for the progress function and can thus be arbitrarily large 
    (but always finite). 
%The process of finding a progress function in the $(SP)$ rule is not driven by the quality of the induced time upper bound. The error made can thus be high. 	
    Notice that a finer analysis for the time spent at a given traversed region would benefit from a lower bound of the progress function, although such bound is not required for proving the reachability itself. 
	 %In addition, a lower bound for the time required to reach a target could be obtained by also providing a lower bound on the decrease of the progress function. But the existence of such a bound is not required for applying the $(SP)$ rule.
\end{remark}

\begin{remark}
    There exists in fact an attractor for the switched system that is inside $\Vcst \leq 0$, namely $\Vcst\leq -\tfrac{1}{2}$. 
    However, proving its reachability is much more involved than proving 
the reachability of $\Vcst \leq 0$, because a trajectory may loop for some time before reaching it. 
    It also features a sliding mode at the boundary $\cphi=\tfrac{7\pi}{4}$ for $d$ in $\left[\tfrac{1}{\sqrt{2}},1\right]$. 
    In fact the only entry to the attractor is the point $\left(\tfrac{7\pi}{4},\tfrac{1}{\sqrt{2}}\right)$ which is reached whenever the system enters 
    its sliding mode. Notice also that when the system loops around this attractor, one has to consider roots of the Lambert W function. 
    We do not carry such proof in \keyx, however, since $\Vcst \leq 0$ is sufficient to prove that the maneuver reaches a region close to the origin. 
    (see \rref{fig:refined}). 
\end{remark}

\begin{figure}[bt]
    \centering
    \includegraphics[width=0.65\textwidth]{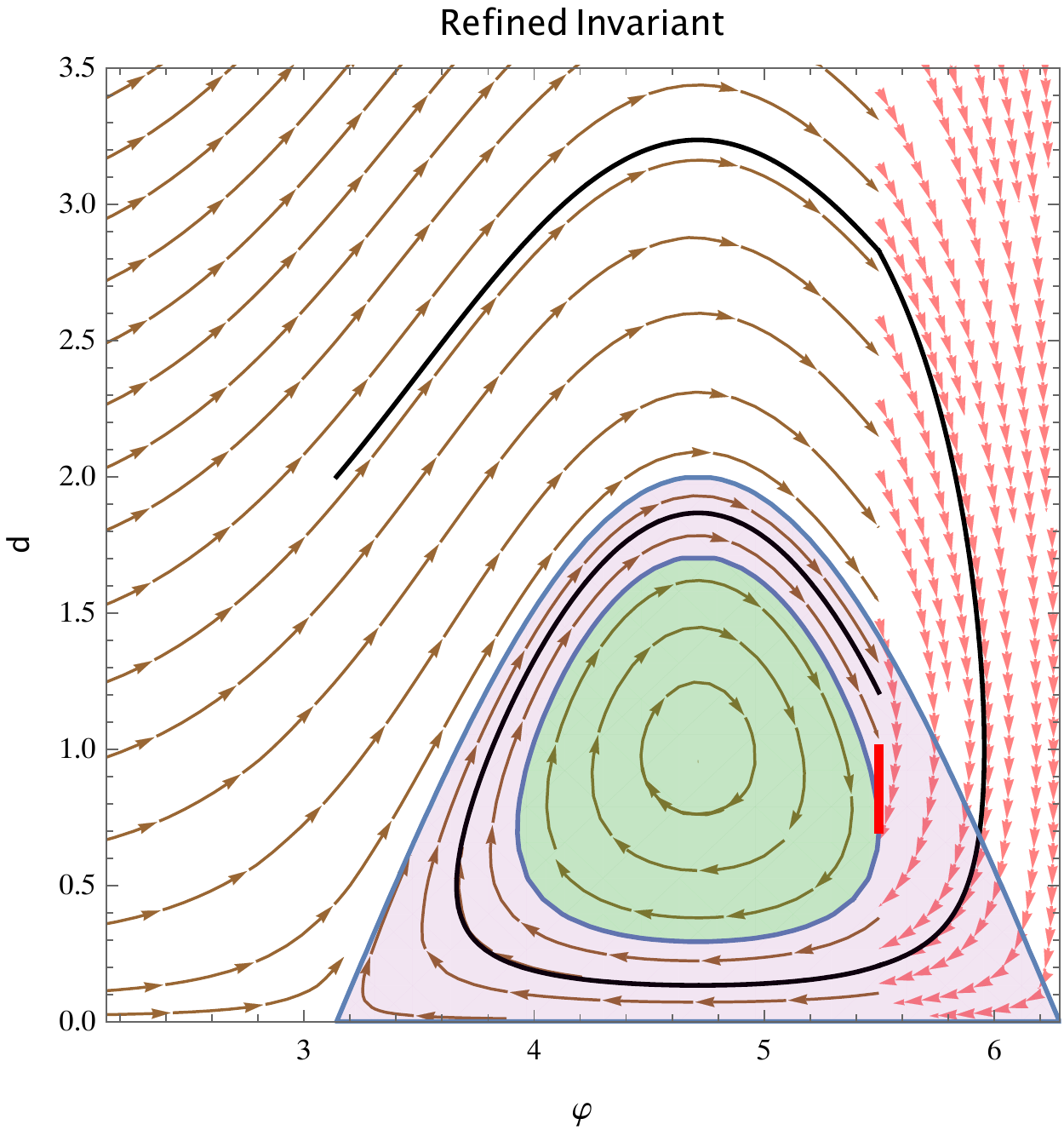}
    \caption{The refined invariant of the controlled vehicle. The red segment shows the sliding mode region before entering the attractor. 
    The black curve shows a possible looping trajectory around the attractor. The region $\invcst$ is shown in purple for convenience. }
    \label{fig:refined}
\end{figure}

\section{Related Work}
%\paragraph{Related Work}

%\paragraph{The station keeping problem}

An interval-based numerical approach has been proposed in~\cite{Jaulin2013} to validate the controller.  
The idea is to construct a discrete abstraction of the state space of the system and then to build a graph where 
 nodes correspond to discrete regions and transitions between nodes mean that the system can potentially move from 
 one region to the other.
To actually build such transitions, the author considered guaranteed numerical tests based on interval analysis. 
%An interval boolean transition matrix is built according to this graph: verified or deleted transitions are represented by respectively 1 and 0, and undetermined transition by the discrete interval $[0,1]$. 
%Eventually, this matrix is used to perform the transitive closure of the graph, in which no undetermined transitions exists and shows that 
This final graph was then used to show that 
the vehicle will be eventually trapped in a limited region of the state space. 
By construction, such a region is an over-approximation of the actual attractor of the system. 
%\ForAuthors{KG: it is not very precise yet}
%
%\ForAuthors{Bouger la remarque 2 (Section 4 p 8) ici ? Completer la discussion pour mieu mettre en evidence les forces et limitations de l'approche ?}
%
This region is not precisely given in~\cite{Jaulin2013} but it can be approximated by the following set: 
\begin{align*}
&( 0 \leq d \leq 2) \; \vee \\
 &\left( \left( 0 \leq \cphi \leq \frac{\pi}{6} \vee \frac{7\pi}{4}\leq \cphi \leq 2 \pi \right) \land 0 \leq d \leq 7.6 \right) \;\vee \\
  &\left( \frac{\pi}{2} \leq \cphi \leq \frac{7\pi}{4} \land 2 \leq d \leq 7.6 \land \frac{112}{\pi}\cphi -25 d - 6 \geq 0 \right).
\end{align*}
In comparison, the zero level set of $\Vcst$ describes more accurately the behavior of the system, entailing a sharper analysis. 
\rref{fig:invariantsComp} depicts the attractor from~\cite{Jaulin2013} in comparison to the ones found in this paper. 
It is worth noting that the method from~\cite{Jaulin2013} does not require the algebraic rewriting. It works directly with 
the original system in polar coordinates and uses guaranteed numerical computations to find the invariant set. 
As in our approach, however, finding invariant candidates was essentially manual. 
%and its automation is not easy to achieve. % is not easy to automate the verification of the system.
%\khalil{TODO: add a picture comparing both sets}
%\ForAuthors{KG: Benjamin can you please give precisely the invariant found there in ?}
%\benjamin{Pas de description precise de l'attracteur en terme de contraintes (sauf pour une partie). Il faudra extrapoler a partir des figures. Je vais faire ca.}

\begin{figure}[htb]
	\centering
	\includegraphics[width= 0.65\textwidth]{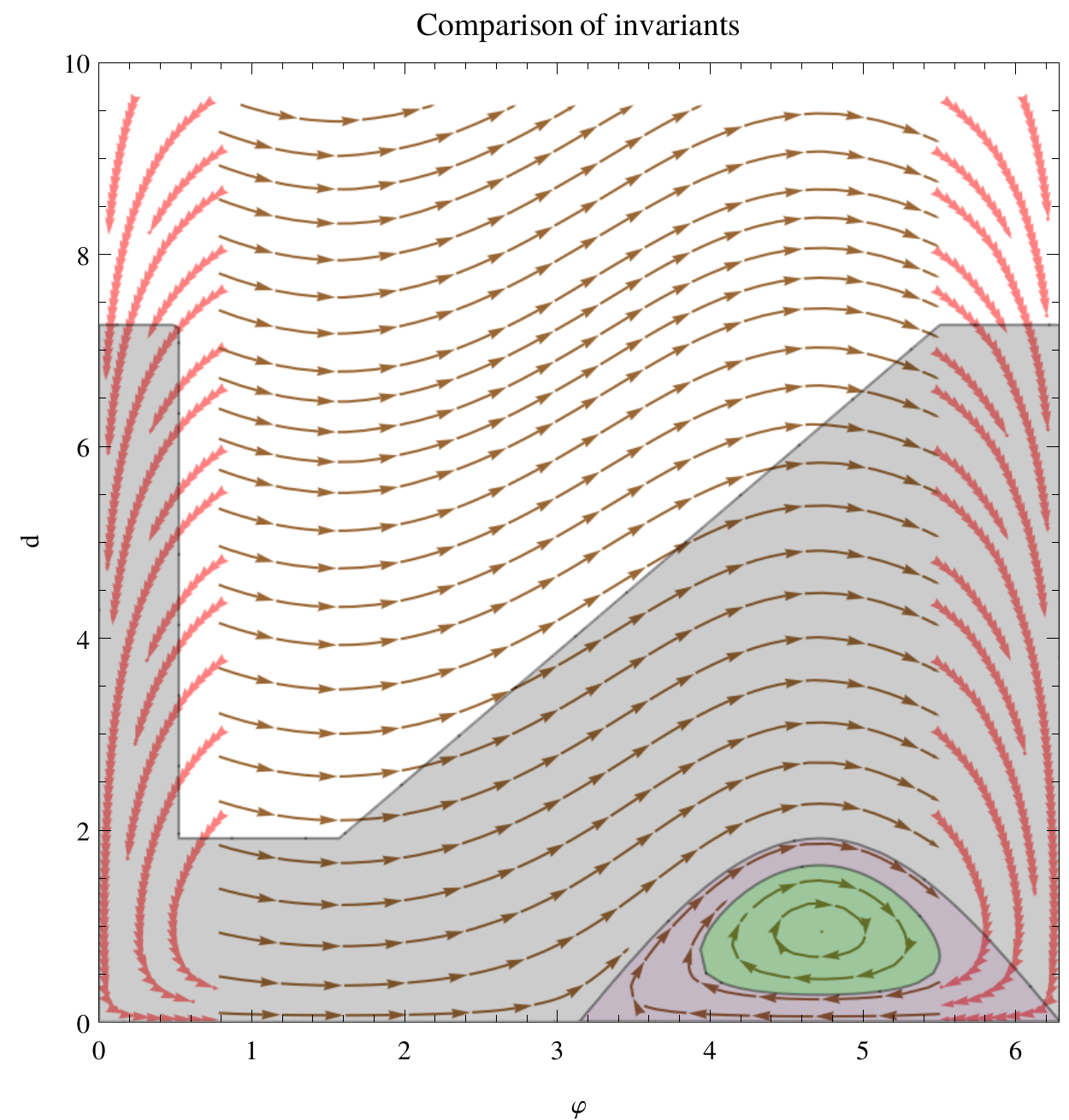}
	\caption{In gray the invariant region obtained in~\cite{Jaulin2013}, in purple $\Vcst \leq 0$ and in green $\Vcst \leq -\frac{1}{2}$.}
	
	\label{fig:invariantsComp}
\end{figure}

\section{Discussion and Conclusion}

We have developed a formal proof for the safety and liveness of an autonomous switched system, corresponding 
to a planar Dubins car whose goal is to perform a station keeping maneuver around the origin. 
This proof can be synthesized in three steps. 
First, we have used recent algebraic methods to derive algebraic invariant properties for the switched system. 
In particular, an algebraic invariant region, corresponding to a station keeping behavior, has been 
identified and formally checked with the hybrid system theorem prover \keyx . % inside an proof assistant tool for hybrid system. 
Second, the unbounded time reachability analysis of the system has been performed. 
To do so, the phase space was partitioned into subregions, each equipped 
with a progress function to prove that the system eventually leaves the region. 
To complete the proof, we exploited the invariants we generated to show that a region is only exited at certain locations. 

Although this case study appears simple, this formal proof relies on non-trivial elements in particular for the reachability analysis. 
We believe, however, that the current state-of-the-art techniques and tools are mature enough to handle such the proof obligations for such case study (up to properly implementing the used proof rules in a theorem prover like \keyx{}).
Some questions remain about how far can the application of these tools be automatized. For example, we were required to decompose the state space for the proof of reachability. The decomposition is obtained by hand from the autonomous system itself and the invariant found for the safety proof. We can think that for other applications, such a decomposition could be partially inferred automatically in a similar manner.

Another challenging future work avenue we are keen to investigate is the interaction between the formal proof process and the design of the control law. For instance, how can one derive a feedback for the designer when the proof fails ? 
This becomes more intricate when the proof strategy proceeds by sufficient conditions not by equivalences, as for reachability. 
A promising direction would be to direct the design so as to ease the generation of suitable differential variants/invariants 
sets (e.g. using Darboux polynomials), like for control Lyapunov functions.

%\blue{Another concern is about the interactions between the formal proof process and the design of the control law when it is likely not achieving its requirements. For example, how can we derive feedback for the design of the control law from a failed proof of its goal. This is a difficult aspect of the method that needs further research. 
%A promising direction would be to direct the design of the control law so as to ease the generation of suitable differential variants / invariants (e.g. using Darboux polynomials) sets, like for control Lyapunov functions, for completing the proof.}

%
%Currently, we are working on fully completing the reachability proof inside the theorem prover \keyx{} as some 
%capabilities for proving the reachability property are not yet implemented in its latest release.
%

%\ForAuthors{Experimental report ?}

%\benjamin{Peut etre ajouter une discussion "critique" de l'approche (force, limitations). Le discuter dans la partie related work en se comparant au papier de Luc ?}

%{The main interesting result of this case study is about the last two parts of the proof. Generalizing this approach of embedding anytime reachability analysis via a discrete abstraction of the system and eventuality analysis of sets of discrete states seems promising for formally prove liveliness of some properties after a finite time without explicitly solving the considered hybrid system.}

%\begin{ack}
%\paragraph{Acknowlegments} 
%We gratefully acknowledge Andr\'e Platzer for numerous discussions and help, 
%and the \keyx team for its timely and friendly support.
%\end{ack}

\section*{Acknowlegments}

This work is partially supported by the DGA / MRIS under the project ``Suret\'e de Fonctionnement des Syst\`emes Robotiques Complexes" and by the Academic and Research Chair ``Engineering of Complex Industrial Systems", sponsored by Thales, Naval Group, Dassault Aviation and DGA.

\bibliographystyle{eptcs}
\bibliography{biblio}
\end{document}